\newtheorem{theorem}{Theorem}
\newtheorem{lemma}[theorem]{Lemma}
\newtheorem{corollary}[theorem]{Corollary}
\newtheorem{definition}{Definition}
\newtheorem{remark}{Remark}
\title{The complexity of sharing a pizza}
\author[1]{Patrick Schnider\thanks{The author has received funding from the European Research Council under the European Unions Seventh Framework Programme ERC Grant agreement ERC StG 716424 - CASe.}}
\affil[1]{Department of Mathematical Sciences, University of Copenhagen, Denmark\\ \texttt{ps@math.ku.dk}}
\date{}
\begin{document}

\maketitle

\begin{abstract}
Assume you have a 2-dimensional pizza with $2n$ ingredients that you want to share with your friend.
For this you are allowed to cut the pizza using several straight cuts, and then give every second piece to your friend.
You want to do this fairly, that is, your friend and you should each get exactly half of each ingredient.
How many cuts do you need?

It was recently shown using topological methods that $n$ cuts always suffice.
In this work, we study the computational complexity of finding such $n$ cuts.
Our main result is that this problem is PPA-complete when the ingredients are represented as point sets.
For this, we give a new proof that for point sets $n$ cuts suffice, which does not use any topological methods.

We further prove several hardness results as well as a higher-dimensional variant for the case where the ingredients are well-separated.
\end{abstract}

\section{Introduction}

\subsection{Mass partitions}

The study of \emph{mass partitions} is a large and rapidly growing area of research in discrete and computational geometry.
It has its origins in the classic \emph{Ham-Sandwich theorem} \cite{HS}.
This theorem states that any $d$ \emph{mass distributions} in $\mathbb{R}^d$ can be simultaneously bisected by a hyperplane.
A mass distribution $\mu$ in $\mathbb{R}^d$ is a measure on $\mathbb{R}^d$ such that all open subsets of $\mathbb{R}^d$ are measurable, $0<\mu(\mathbb{R}^d)<\infty$ and $\mu(S)=0$ for every lower-dimensional subset $S$ of $\mathbb{R}^d$.
A vivid example of this result is, that it is possible to share a 3-dimensional sandwich, consisting of bread, ham and cheese, with a friend by cutting it with one straight cut such that both will get exactly half of each ingredient.
This works no matter how the ingredients lie.
In fact, as Edelsbrunner puts it, this even works if the cheese is still in the fridge \cite{EdelsbrunnerBook}.

But what if there are more ingredients, for example on a pizza?
One way to bisect more than $d$ masses is to use more complicated cuts, such as algebraic surfaces of fixed degree \cite{HS} or piece-wise linear cuts with a fixed number of turns \cite{Chess1,WedgesArxiv}.
Another option is to use several straight cuts, as introduced by Bereg et al.~\cite{Bereg}:
Consider some arrangement of $n$ hyperplanes in $\mathbb{R}^d$.
The cells of this arrangement allow a natural 2-coloring, where two cells get a different color whenever they share a $(d-1)$-dimensional face.
We say that an arrangement bisects a mass distribution $\mu$ if the cells of each color contain exactly half of $\mu$.
See Figure \ref{fig:partition_example} for an illustration.
It was conjectured by Langerman that any $nd$ mass distributions in $\mathbb{R}^d$ can be simultaneously bisected by an arrangement of $n$ hyperplanes (\cite{Stefan}, see also \cite{PizzaArxiv}).
In a series of papers, this conjecture has been resolved for $4$ masses in the plane \cite{PizzaArxiv}, for any number of masses in any dimension that is a power of 2 (and thus in particular also in the plane) \cite{Pizza1} and in a relaxed setting for any number of masses in any dimension \cite{SubspacesDCG}.
However, the general conjecture remains open.

\begin{figure}
\centering
\includegraphics[scale=0.7]{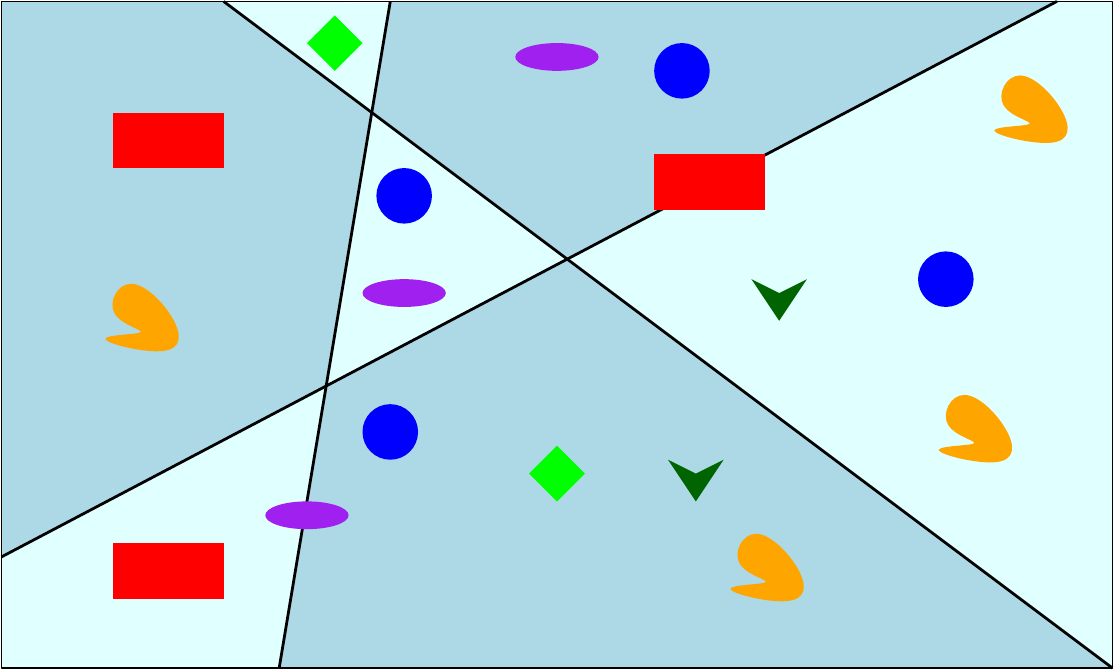}
\caption{A bisection of 6 masses with 3 cuts.}
\label{fig:partition_example}
\end{figure}

There are many other variants of mass partitions that have been studied, see e.g.~\cite{KanoSurvey, SoberonSurvey} for recent surveys.
In this work, we focus on the algorithmic aspects of the 2-dimensional variant of bisections with hyperplane arrangements, that is, bisections with line arrangements.
For this, let us formally define the involved objects.

\begin{definition}[Partition induced by an arrangement of oriented lines]
Let $A=(\ell_1,\ldots,\ell_n)$ be a set of oriented lines in the plane.
For each $\ell_i$, define be $R_i^+$ and $R_i^-$ the part of the plane on the positive and negative side of $\ell_i$, respectively.
Define $R^+(A)$ as the part of the plane lying in an even number of $R_i^+$ and not on any of the $\ell_i$.
Similarly, define $R^-(A)$ as the part of the plane lying in an odd number of $R_i^+$ and not on any of the $\ell_i$.
Now, $R^+(A)$ and $R^-(A)$ are disjoint, and they partition $\mathbb{R}^2\setminus\{\ell_1,\ldots,\ell_n\}$ into two parts.
\end{definition}

Note that reorienting one line just swaps $R^+(A)$ and $R^-(A)$, so up to symmetry, the two sides are already determined by the underlying unoriented line arrangement.
We will thus often forget about the orientations and just say that a mass is bisected by a line arrangement.
Hubard and Karasev \cite{Pizza1} have shown the following:

\begin{theorem}[Planar pizza cutting theorem \cite{Pizza1}]
Any $2n$ mass distributions in the plane can be simultaneously bisected by an arrangement of $n$ lines.
\end{theorem}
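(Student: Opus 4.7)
My plan is a configuration-space / test-map argument from equivariant topology, the standard paradigm for mass-partition results. First I would parametrize oriented lines in $\mathbb{R}^2$ by a $2$-sphere: an oriented line corresponds to a pair (unit normal, signed distance), giving a cylinder $S^1\times\mathbb{R}$ whose two ends I collapse to points to obtain $S^2$, with reorientation realized as the antipodal involution. The configuration space of ordered $n$-line arrangements is then $X=(S^2)^n$, on which $G=(\mathbb{Z}/2)^n$ acts antipodally on each factor. Define the test map
\[
\Phi:X\to\mathbb{R}^{2n},\qquad \Phi(A)_j \;=\; \mu_j(R^+(A))-\mu_j(R^-(A)), \quad j=1,\ldots,2n,
\]
whose zeros correspond exactly to the simultaneous bisections we want. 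Because reorienting any line swaps $R^+(A)$ with $R^-(A)$, every generator of $G$ acts on the target $\mathbb{R}^{2n}$ by $-1$ in every coordinate, so $\Phi$ is $G$-equivariant with target $2n$ copies of the sign representation $\sigma:G\to\{\pm 1\}$. The theorem is equivalent to the non-existence of a $G$-equivariant map $X\to\mathbb{R}^{2n}\setminus\{0\}$.

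The standard obstruction is the equivariant Euler class of the target representation, viewed in $H^*_G(X;\mathbb{F}_2)$. Since the antipodal action on each $S^2$ is free, the Borel construction reduces to $\mathbb{R}P^2$ on each factor, and
\[
H^*_G(X;\mathbb{F}_2) \;\cong\; \mathbb{F}_2[t_1,\ldots,t_n]\big/(t_1^3,\ldots,t_n^3), \qquad \mathrm{eu}(2n\sigma) \;=\; (t_1+\cdots+t_n)^{2n}.
\]
Non-vanishing of this class in the quotient ring would imply that every $\Phi$ has a zero in $X$, hence a simultaneous bisection exists. One also has to handle the boundary of the compactification (coinciding lines or the collapsed "lines at infinity"), either by a perturbation/limit argument on the masses or by verifying that any degenerate zero still encodes a legitimate bisection.

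The main obstacle, I expect, is precisely verifying the non-vanishing of this Euler class. In characteristic $2$ one has $(a+b)^{2^k}=a^{2^k}+b^{2^k}$, which collapses almost all cross terms: already for $n\geq 2$ the naive class $(t_1+\cdots+t_n)^{2n}$ vanishes modulo $(t_1^3,\ldots,t_n^3)$, so the crude Euler-class invariant is insufficient. A more refined equivariant tool is needed — for instance, restricting to a $G$-invariant open subspace of $X$ (say, configurations of distinct and non-parallel lines) whose equivariant cohomology is richer, or using the Fadell–Husseini ideal-valued index, or an $RO(G)$-graded Euler class that retains torsion information lost by the mod-$2$ class. Finding the right refined invariant is where I expect the topological proof of Hubard and Karasev to do its real work; everything else in the CS/TM scheme is essentially bookkeeping around it.
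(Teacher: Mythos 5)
Your setup (configuration space $(S^2)^n$, test map to $2n$ copies of the sign character of $(\mathbb{Z}/2)^n$, zeros $=$ bisections) is the right starting point, but the proposal stops exactly where the proof has to happen, and as written it cannot be completed. You correctly compute that the naive $(\mathbb{Z}/2)^n$-equivariant mod-$2$ Euler class $(t_1+\cdots+t_n)^{2n}$ reduces to $n!\,t_1^2\cdots t_n^2$ in $\mathbb{F}_2[t_1,\ldots,t_n]/(t_1^3,\ldots,t_n^3)$ and hence vanishes for all $n\geq 2$; but you then only gesture at ``a more refined invariant'' without naming one that works or verifying its non-vanishing. Since the whole theorem is equivalent to that non-vanishing statement, this is a genuine gap, not bookkeeping. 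The missing idea is the extra symmetry you never use: permuting the $n$ lines leaves the two-coloring $R^+/R^-$ unchanged, so the test map is equivariant under the full wreath product $(\mathbb{Z}/2)^n\rtimes S_n$, and it is this larger group (together with the fact that the dimension $2$ is a power of $2$) that Hubard and Karasev exploit; the non-vanishing is ultimately certified by a parity count of solutions for a generic well-separated configuration, where the bisecting arrangements are exactly the $(2n-1)!!$ pairings of the $2n$ masses into Ham-Sandwich pairs, an odd number. Without either the $S_n$-equivariance or such a parity-of-solutions argument, every invariant you have available vanishes and no conclusion follows.

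It is also worth noting how this compares with the present paper: the paper does not reprove the continuous theorem at all (it cites Hubard--Karasev for it), and instead gives a purely combinatorial proof of the discrete point-set version, which is the combinatorial shadow of precisely the argument you are missing. There, one starts from a well-separated configuration with $(2n-1)!!$ bisecting arrangements (odd), deforms it to the target point sets, and shows that bisecting arrangements appear or disappear only in pairs, so an odd number---in particular at least one---survives. If you want to salvage your topological route, you should either carry out the $(\mathbb{Z}/2)^n\rtimes S_n$-equivariant obstruction computation (or the corresponding degree/cobordism argument) in detail, or convert your sketch into the deformation-plus-parity argument; as it stands, the proposal identifies the obstacle but does not overcome it.
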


From an algorithmic point of view, we want to restrict our attention to efficiently computable mass distributions.

\begin{definition}[Computable mass distribution]
A computable mass distribution is a continuous function $\mu$ which assigns to each arrangement of $n$ oriented lines two values $\mu(R^+(A))$ and $\mu(R^-(A))$, such that $\mu(R^+(A))+\mu(R^-(A))=\mu(R^+(A'))+\mu(R^-(A'))$ for any two arrangements $A$ and $A'$.
We further assume that $\mu$ can be computed in time polynomial in the description of the input arrangement.
\end{definition}

We now have that the following problem always has a solution.

\begin{definition}[\textsc{PizzaCutting}]
The problem \textsc{PizzaCutting} takes as input $2n$ computable mass distributions $\mu_1,\ldots,\mu_{2n}$ and returns an arrangement $A$ of oriented lines in the plane such that for each $i$ we have $\mu_i(R^+(A))=\mu_i(R^-(A))$.
\end{definition}

An important special case of masses are point sets with the counting measure.
They do not quite fit the above framework of mass distributions, as the number of points on a line can be non-zero.
This can however be resolved by rounding: we say that a line arrangement bisects a point set $P$, if there are at most $\lfloor\frac{|P|}{2}\rfloor$ many points of $P$ in both parts.
Note that if the number of points in $P$ is odd, this implies that at least one point needs to lie on some line.
In the following, we will assume that all point sets are in general position, that is, no three points lie on a common line.
With this, we can assume that for each point set at most one point lies on a line.
Standard arguments (see e.g.~\cite{MatousekBU}) show that the existence of partitions for mass distributions imply the analogous result for point sets with this definition of bisection.
Alternatively, we give a direct proof of the following in Section \ref{sec:containment}.

\begin{corollary}[Discrete planar pizza cutting theorem]\label{cor:discreteplanar}
Any $2n$ point sets in the plane can be simultaneously bisected by an arrangement of $n$ lines.
\end{corollary}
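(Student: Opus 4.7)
The plan is to deduce Corollary \ref{cor:discreteplanar} from the continuous planar pizza cutting theorem by an approximation argument. For a parameter $\varepsilon > 0$, I would replace each point $p$ appearing in any of $P_1,\dots,P_{2n}$ by a closed disk $D_p^\varepsilon$ of radius $\varepsilon$ centered at $p$, equipped with uniform density of total mass $1$. For each $i$ this yields a mass distribution $\mu_i^\varepsilon$ (the sum of the disk measures over $p\in P_i$), with $\mu_i^\varepsilon(\mathbb{R}^2) = |P_i|$, provided $\varepsilon$ is small enough that the disks are pairwise disjoint. Applying the continuous pizza cutting theorem to $\mu_1^\varepsilon,\dots,\mu_{2n}^\varepsilon$ produces an arrangement $A_\varepsilon$ of $n$ oriented lines with $\mu_i^\varepsilon(R^+(A_\varepsilon)) = \mu_i^\varepsilon(R^-(A_\varepsilon)) = |P_i|/2$ for every $i$.

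Next, I would pass to the limit $\varepsilon \to 0$. Let $B$ be a closed ball containing every point in $\bigcup_i P_i$. Any line of $A_\varepsilon$ that misses $B$ has all of $\mu_i^\varepsilon$ on one of its sides, so it can be translated along its normal direction until it becomes tangent to $B$ without changing any bisection. After this normalization, the lines of $A_\varepsilon$ are parameterized by a compact set (for instance, by pairs of points on $\partial B$ together with an orientation), so up to subsequence $A_\varepsilon \to A_0$ for some arrangement $A_0$ of $n$ (possibly repeated) oriented lines.

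The final step is to verify that $A_0$ bisects each $P_i$ in the discrete sense. For a fixed $i$, let $n_i^+, n_i^-, n_i^0$ count the points of $P_i$ lying, respectively, strictly in $R^+(A_0)$, strictly in $R^-(A_0)$, and on some line of $A_0$. Any point $p \in P_i$ strictly in $R^+(A_0)$ is at positive distance from all lines of $A_0$, so for $\varepsilon$ small enough the disk $D_p^\varepsilon$ is disjoint from every line of $A_\varepsilon$, lies in a single cell and, since the sign pattern of $p$ relative to the lines is preserved under small perturbations, that cell is a component of $R^+(A_\varepsilon)$. Hence $n_i^+ \leq \mu_i^\varepsilon(R^+(A_\varepsilon)) = |P_i|/2$, and since $n_i^+$ is an integer we conclude $n_i^+ \leq \lfloor |P_i|/2\rfloor$, with the symmetric inequality for $n_i^-$.

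The main obstacle is a clean handling of the limit, especially the degenerate configurations. Lines of $A_\varepsilon$ that escape to infinity are controlled by the normalization to $B$; lines of $A_\varepsilon$ that coalesce give rise to coincident lines in $A_0$, which do not affect the induced partition, but such a pair can always be replaced by two parallel lines in a generic position missing $\bigcup_i P_i$ to recover the required count of $n$ cuts. Once these points are addressed, the rest of the argument is a routine weak-convergence computation, and the general position assumption on the $P_i$ is not actually needed for the existence claim.
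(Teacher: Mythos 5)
Your argument is correct in its main thrust, but it takes a genuinely different route from the paper. You deduce the discrete statement from the continuous planar pizza cutting theorem by the standard smoothing-and-limit argument (points $\to$ small uniform disks, apply the continuous theorem, extract a convergent subsequence of normalized line arrangements, and observe that each point strictly inside $R^{\pm}(A_0)$ eventually carries a whole disk of mass $1$ inside $R^{\pm}(A_\varepsilon)$, forcing $n_i^{\pm}\leq\lfloor|P_i|/2\rfloor$). This is precisely the "standard argument" the paper alludes to and deliberately avoids: its own proof is a purely combinatorial deformation argument, starting from well-separated point sets (where the bisecting arrangements are exactly the $(2n-1)!!$ pairings into Ham--Sandwich cuts, an odd number), deforming towards the target configuration, and showing that bisecting arrangements can only appear or disappear in pairs at order-type changes, so their number stays odd. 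The trade-off is clear: your proof is shorter but inherits the topological, non-constructive continuous theorem of Hubard and Karasev and gives no algorithmic information, whereas the paper's proof is elementary and is engineered exactly so that it can be turned into the PPA-membership construction of Section \ref{sec:containment}; it also has the side effect of not needing the full continuous theorem at all. Your closing remark that general position is not needed for the bare existence statement (with the "at most $\lfloor|P_i|/2\rfloor$ on each side" definition) is correct.

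One step as written does not quite work: the cleanup of coincident limit lines. You propose to replace a coincident pair in $A_0$ by two parallel lines in generic position missing $\bigcup_i P_i$. If the doubled line passes through a point $p$ of some $P_i$ with $|P_i|$ odd, and no other line of $A_0$ contains a point of $P_i$, then $p$ must stay on a line for the bisection to hold (e.g.\ $|P_i|=1$ forces this); removing both copies and relocating them away from all points puts $p$ strictly on one side and can push that side to $\lfloor|P_i|/2\rfloor+1$. The repair is easy and local: keep one copy of the doubled line in place and translate only the other copy parallel to itself by a distance smaller than the minimal distance from the line to any point of $\bigcup_i P_i$ not lying on it. Then no point lies strictly between the two copies, every point off the original line keeps its parity, points on the original line remain on a line of the arrangement, and you recover $n$ distinct lines with all counts unchanged. (Also, a coincident pair with opposite orientations does affect the labelling --- it swaps $R^+$ and $R^-$ globally --- but this is harmless for bisection.) With this adjustment your proof is complete.
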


Thus, also the following discrete version always has a solution.

\begin{definition}[\textsc{DiscretePizzaCutting}]
The problem \textsc{DiscretePizzaCutting} takes as input $2n$ point sets $P_1,\ldots,P_{2n}$ in general position in the plane and returns an arrangement $A$ of oriented lines in the plane such that for each $i$ we have $|P_i\cap R^+(A)|=|P_i\cap R^-(A)|=\lfloor\frac{|P_i|}{2}\rfloor$.
\end{definition}

The pizza cutting problem can be viewed as a higher-dimensional generalization of the consensus halving and necklace splitting problems.

\begin{definition}[\textsc{ConsensusHalving/NecklaceSplitting}]
The problem \textsc{ConsensusHalving} takes as input $n$ valuation functions $v_1,\ldots,v_{n}$ on the interval $[0,1]$ and returns a partition of $[0,1]$ into $n+1$ intervals (that is, using $n$ cuts), each labeled $"+"$ or $"-"$, such that for each valuation function we have $v_i(\mathcal{I}^+)=v_i(\mathcal{I}^-)$ (where $\mathcal{I}^x$ denotes the union of intervals labeled $"x"$).
The problem \textsc{NecklaceSplitting} is the same, but taking as input $n$ point sets, again using the above definition of bisections of point sets.
\end{definition}

Again, a solution to the problems is always guaranteed to exist.
In the case of mass distributions, this result is known as the \emph{Hobby-Rice theorem} \cite{Necklace_Hobby}.
For necklaces, the statement holds even for the generalized problem of sharing with more than two people \cite{Necklace_Alon,Necklace_Alon_West}.
In this work, whenever we refer to the \emph{Necklace splitting theorem}, we mean the version for two people.

Finally, for all above problems, we can also consider the decision version, where we are given one more measure or point set than the number that can always be bisected, and we need to decide whether there still is a bisection.
We denote these problems by adding "\textsc{Decision}" to their name.

\subsection{Algorithms and complexity}

Most proofs of existence of certain mass partitions use topological methods, which, by their nature, are not algorithmic.
Thus, there has been quite some effort in developing algorithms that find these promised partitions, ideally efficiently.
Arguably the most famous result in this direction are the algorithms for Ham-Sandwich cuts by Lo, Steiger and Matou\v{s}ek \cite{HSAlgo2D,HSAlgo}.
While in the plane, their algorithm runs in linear time, in general the runtime shows an exponential dependency on the dimension.
This curse of dimensionality seems to be a common issue for many algorithmic version of mass partition problems, and most problems have only been studied from an algorithmic point of view in low dimensions, where the constructed algorithms either rely on a relatively small space of solutions or a simplified proof which allows for an algorithmic formulation, see e.g.\ \cite{AichholzerIslands,BeregFans,LinesInSpace}.

The curse of dimensionality was made explicit for the first time by Knauer, Tiwary and Werner, who showed that deciding whether there is a Ham-Sandwich cut through a given point in arbitrary dimensions is W[1]-hard (and thus also NP-hard) \cite{Knauer}.
More recently, in several breakthrough papers, Filos-Ratsikas and Goldberg have shown that computing Ham-Sandwich cuts in arbitrary dimensions is PPA-complete, and so are \textsc{NecklaceSplitting} and \textsc{ConsensusHalving}, the latter even in an approximation version \cite{GoldbergConsensus,GoldbergNecklace}.

The class PPA was introduced in 1994 by Papadimitriou \cite{Papadimitriou}.
It captures search problems, where the existence of a solution is guaranteed by a parity argument in a graph.
More specifically, the defining problem is the following search problem in a (potentially exponentially sized) graph $G$: given a vertex of odd degree in $G$, where $G$ is represented via a polynomially-sized circuit which takes as input a vertex and outputs its neighbors, find another vertex of odd degree.
The class PPA is a subclass of TFNP (Total Function NP), which are total search problems where solutions can be verified efficiently.

\begin{figure}
\centering
\includegraphics[scale=0.7]{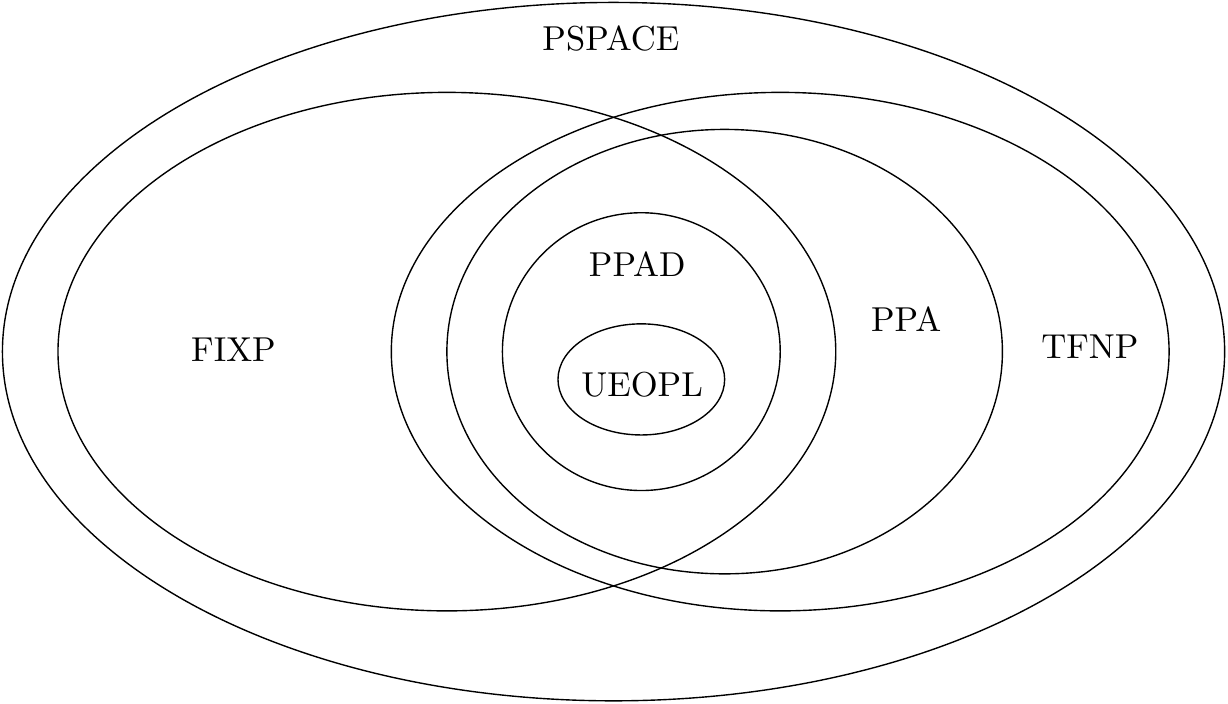}
\caption{Containment relations of some complexity classes for total problems.}
\label{fig:classes}
\end{figure}

A subclass of PPA that is of importance in this work is UEOPL \cite{UEOPL}, which is a subclass of PPAD \cite{Papadimitriou}.
PPAD is similar to PPA, but instead of an undirected graph, we are given a directed graph in which each vertex has at most one predecessor and at most one successor.
We are given a vertex without predecessor, and our goal is to find another vertex without predecessor or successor.
If we are further given a potential function which strictly increases on a directed path such that there is a unique vertex with maximal potential, finding this vertex is the defining problem for the class UEOPL.

The class UEOPL is related to mass partitions through the fact that finding the unique  discrete Ham-Sandwich cut in the case that the point sets are \emph{well-separated} is in UEOPL \cite{AlphaHS}
We say that $k$ point sets $P_1,\ldots,P_k$ in $\mathbb{R}^d$ are well-separated, if for no $d$-tuple of them their convex hulls can be intersected with a $(d-2)$-dimensional affine subspace.\footnote{Admittedly, this would be a very weird pizza.}
This definition extends to masses by forbidding intersections of affine subspaces with the convex hulls of their supports.
In fact, for well-separated masses and point sets, the \emph{$\alpha$-Ham-Sandwich theorem} states that it is always possible to simultaneously cut off an arbitrary given fraction from each mass or point set with a single hyperplane \cite{AlphaHSMasses,AlphaHSPoints}.

The class PPAD has so far mostly been related to the computation of Nash and market equilibria \cite{Equi1,Equi2,Equi3,Equi4,Equi5,Equi6,Equi7,Equi8,Equi9,Equi10,Equi11}.

Finally, the two last classes that are relevant for this work are FIXP and $\exists\mathbb{R}$.
The class FIXP is the class of problems that can be reduced to finding a Brouwer fixed point \cite{FIXP}, whereas $\exists\mathbb{R}$ is the class of decision problems which can be written in the existential theory of the reals.

In the context of mass partitions, apart from the above mentioned results on Ham-Sandwich cuts, Consensus halvings and Necklace splittings, some of the above classes also appear in the complexity of \emph{Square-Cut Pizza sharing}.
In this variant, introduced by Karasev, Rold\'{a}n-Pensado and Sober\'{o}n, $n$ masses in the plane are bisected by a cut which is a union of at most $n$ axis-parallel segments, or in other words, a piecewise linear cut with at most $n-1$ $90^{\circ}$-turns \cite{Chess1}.
It was recently shown by Deligkas, Fearnley and Melissourgos that finding such a cut even for restricted inputs is PPA-complete, finding a cut where a constant number of additional turns are allowed is PPAD-hard, and that the corresponding decision version is NP-hard \cite{SCPizza}.
For more general masses, they show the problem to be FIXP-hard and the decision version to be $\exists\mathbb{R}$-complete.
This work was heavily influenced and motivated by their paper: apart from the different setting, their results are very similar from the results in this work, showing the relation between those two pizza cutting variants.
Indeed, in a new version of their paper, the authors of \cite{SCPizza} prove some of the same hardness results as this work, and also some stronger hardness results for approximate bisections.

\subsection{Our contributions}

Our main contribution is that \textsc{DiscretePizzaCutting} is PPA-complete.
While the hardness is rather straight-forward, the containment requires some more work.
We give a new proof for the existence of pizza cuttings for point sets which, while inspired by the ideas of Hubard and Karasev, uses only elementary geometric techniques which allow us to place the problem in PPA.

We further prove that \textsc{PizzaCutting} is FIXP-hard and that \textsc{PizzaCuttingDecision} is $\exists\mathbb{R}$-hard and that finding the minimum number of cuts required to bisect an instance of \textsc{DiscretePizzaCutting} is NP-hard.

Finally, for well-separated masses, we show that the $\alpha$-Ham-Sandwich theorem generalizes to pizza cuttings.
For point sets in fixed dimensions, we give a linear time algorithm to find such a cut, whereas for arbitrary dimensions we place the problem in UEOPL.

Many of our results are more or less direct applications of known results.
In particular, all hardness results follow from a proof of the existence of a Consensus Halving from the existence of a pizza cut.
We present this proof and the hardness results in Section \ref{sec:hardness}.
In Section \ref{sec:separated}, we consider the case of well-separates masses and point sets, as some of the ideas are needed for our containment proof.
The part where the most new ideas are used is Section \ref{sec:containment}, where we show the containment of \textsc{DiscretePizzaCutting} in PPA.

\section{Hardness results}\label{sec:hardness}

In this section, we give a proof of existence of Consensus halvings and necklace splittings using the planar pizza cutting theorem.
This proof gives a natural reduction to the corresponding algorithmic problems, and thus a variety of hardness results follow.

\begin{lemma}
The planar pizza cutting theorem implies the Hobby-Rice theorem.
\end{lemma}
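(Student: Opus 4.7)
The plan is to embed the one-dimensional Hobby-Rice problem into the planar pizza cutting theorem by interpreting valuations on $[0,1]$ as planar masses concentrated near a horizontal segment, using duplication to meet the $2n$-mass requirement. Concretely, given mass distributions $\nu_1, \ldots, \nu_n$ on $[0,1]$, for each small $\epsilon > 0$ I would form the planar mass distribution $\tilde\nu_i^\epsilon$ on the rectangle $[0,1]\times[-\epsilon,\epsilon]$ as the product of $\nu_i$ with the normalized uniform measure on $[-\epsilon,\epsilon]$; the thickening is only needed so that $\tilde\nu_i^\epsilon$ vanishes on lower-dimensional subsets, as required by the definition of mass distribution. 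I then form the list of $2n$ planar masses $\tilde\nu_1^\epsilon,\tilde\nu_1^\epsilon,\ldots,\tilde\nu_n^\epsilon,\tilde\nu_n^\epsilon$ by duplicating each, and apply the planar pizza cutting theorem to obtain an arrangement $A^\epsilon$ of $n$ oriented lines bisecting all of them.

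The key observation is that $R^+(A^\epsilon)$ and $R^-(A^\epsilon)$, restricted to the $x$-axis, form an alternating sequence of open intervals: as a point slides along $\{y=0\}$, each transverse crossing of a line of $A^\epsilon$ flips by one the count of positive sides on which it lies, toggling between $R^+$ and $R^-$. Thus the at most $n$ intersections of the lines of $A^\epsilon$ with the $x$-axis cut $[0,1]$ into at most $n+1$ intervals with alternating $\pm$-labels, and the bisection guarantee for each $\tilde\nu_i^\epsilon$ translates, in the limit $\epsilon \to 0$, into a Hobby-Rice bisection of $\nu_i$ on $[0,1]$: the $\tilde\nu_i^\epsilon$ weakly concentrate onto $\nu_i$ along $\{y=0\}$, so the planar identity $\tilde\nu_i^\epsilon(R^+(A^\epsilon))=\tilde\nu_i^\epsilon(R^-(A^\epsilon))$ degenerates to equality of $\nu_i$-mass on the $+$ and $-$ intervals.

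To make this rigorous I would pass to a subsequential limit of the arrangements $A^\epsilon$ in the compactified space of line configurations (allowing some lines to escape to infinity or become horizontal), obtaining at most $n$ limit cuts $0\le y_1\le\ldots\le y_n\le 1$ — padded with boundary values when lines escape $[0,1]$ — together with alternating signs $\epsilon_i\in\{+1,-1\}$ satisfying $\sum_i \epsilon_i\,\nu_j([y_{i-1},y_i])=0$ for each $j$. The main obstacle is this compactness/limit argument: lines may degenerate (slide to infinity, merge, or become tangent to the $x$-axis), and one must verify that the resulting configuration still yields a valid Hobby-Rice partition. This is routine but slightly delicate; an alternative that avoids limits entirely is to discretize each $\nu_i$ into a sufficiently fine point set, apply the discrete planar pizza cutting statement to the duplicated configuration, and conclude by a standard discretization limit.
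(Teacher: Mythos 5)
There is a genuine gap, and it sits exactly at the step you label ``routine but slightly delicate''. For each fixed $\epsilon$ the pizza cutting theorem only guarantees \emph{some} bisecting arrangement $A^\epsilon$, and nothing prevents its lines from being horizontal (or nearly horizontal) inside the strip $[0,1]\times[-\epsilon,\epsilon]$. Such a line bisects the thickened masses by splitting them in the $y$-direction; its trace on the $x$-axis carries no information, and in the limit it converges to the $x$-axis itself, where your alternating-interval picture is undefined. Concretely, take $n=1$ and any $\nu_1$: for every $\epsilon$ the single horizontal line $y=0$ bisects the duplicated product measures $\tilde\nu_1^\epsilon,\tilde\nu_1^\epsilon$, so the theorem may hand you only this arrangement; its limit is the line $y=0$ and yields no point $t$ with $\nu_1([0,t])=\nu_1([t,1])$. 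Duplicating the masses even encourages such degenerate cuts, since the duplicate imposes no new constraint, and the same failure mode reappears in your discretization variant: point sets placed on or very near a horizontal segment can be bisected by lines almost parallel to that segment, separating ``upper'' from ``lower'' perturbed points rather than cutting the interval. So the claimed degeneration of the planar bisection into a Hobby--Rice bisection does not follow without an argument excluding (near-)horizontal lines, which the bare existence statement does not provide.

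The paper avoids precisely this by embedding $[0,1]$ on the moment curve $(t,t^2)$ instead of a straight segment: strict convexity guarantees that every line meets the support in at most two points, each half-plane traces out an interval or the complement of an interval, and no thickening or limiting is needed. The price is that $n$ lines can create up to $2n$ cut points on the curve, which the paper pays with a pigeonhole trick: all $n$ valuations are placed on each of two disjoint arcs ($t\in[0,1]$ and $t\in[2,3]$), so that at least one arc receives at most $n$ intersections, and the cuts on that arc give the Hobby--Rice partition directly. If you want to salvage your straight-segment construction, you would need to show that degenerate arrangements can always be replaced by transversal ones, which is essentially a new existence argument rather than a reduction.
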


\begin{proof}
Consider the moment curve $\gamma$ in $\mathbb{R}^2$, that is, the curve parametrized by $(t,t^2)$.
Note that any line $\ell$ intersects $\gamma$ in at most two points, call them $g_1$ and $g_2$, with $g_1$ being to the left of $g_2$ (in the case of a single or no intersection, we may consider $g_1=-\infty$ or $g_2=\infty$ or both).
Let $x_1$ and $x_2$ be the projections of $g_1$ and $g_2$ to the $x$-axis under the projection $\pi(x,y):=x$.
Consider now a half-plane $h$ bounded by $\ell$ and consider its intersection with $\gamma$.
If $h$ lies below $\ell$, then $h\cap\gamma$ projects to the interval $[x_1,x_2]$.
If $h$ lies above $\ell$, then $h\cap\gamma$ projects to $(-\infty,x_1]\cup[x_2,\infty)$.
Similarly, if $\ell$ is vertical, $h\cap\gamma$ projects either to $(-\infty,x_1]$ or $[x_2,\infty)$.
Thus, in all cases $h\cap\gamma$ projects to an interval or the complement of an interval, which, in a slight abuse of notation, we denote by $\pi(h)$.

Given a valuation function $v$, we now want to define a mass distribution $\mu$ in the plane.
For this, it is enough to just define $\mu$ on all half-planes.
This we can do using the above observations: for any half-plane $h$, we define $\mu(h):=v(\pi(h))$.

This way, we have defined $n$ mass distributions.
Now, we do the same thing, but shift the interval $[0,1]$, which is the support of the valuation functions, to the interval $[2,3]$.
More formally, we consider the projection $\varphi(x,y):=x-2$ and, given a valuation function $v$, define a mass distribution $\eta$ as $\eta(h):=v(\varphi(h))$.

We have now defined $2n$ mass distributions.
By the pizza cutting theorem, there exists and arrangement $A=(\ell_1,\ldots,\ell_n)$ of $n$ lines which simultaneously bisects these mass distributions.
Consider the intervals $I_1$ and $I_2$ on $\gamma$ defined by $t\in[0,1]$ and $t\in[2,3]$.
As there are at most $2n$ intersections of $A$ and $\gamma$, by the pigeonhole principle there are at most $n$ intersections in one of them, say $I_1$.
Let $i_1,\ldots,i_n$ be the projections of these intersections under the projection $\pi$.
We claim that $i_1,\ldots,i_n$ simultaneously bisect $v_1,\ldots,v_n$.

To show this, consider some valuation function $v_i$.
By construction, we now have that $v_i(\mathcal{I}^+)=\mu_i(R^+(A))=\mu_i(R^-(A))=v_i(\mathcal{I}^+)$, which proves the claim.
In the case where $I_2$ has at most $n$ intersections, we can do the same argument, replacing $\pi$ with $\varphi$ and $\mu$ with $\eta$.
\end{proof}

Note that all the steps in the proof can be computed in polynomial time.
Thus, as \textsc{ConsensusHalving} is FIXP-hard \cite{ExactConsensus}, we immediately get the following:

\begin{corollary}
\textsc{PizzaCutting} is FIXP-hard.
\end{corollary}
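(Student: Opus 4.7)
The plan is to show that the construction in the previous lemma is a polynomial-time reduction from \textsc{ConsensusHalving} to \textsc{PizzaCutting} which also recovers solutions in polynomial time; FIXP-hardness under whichever reduction notion underlies the FIXP-hardness of \textsc{ConsensusHalving} then transfers directly, yielding the corollary.

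First I would verify the forward direction: the map that sends an instance $(v_1,\ldots,v_n)$ of \textsc{ConsensusHalving} to the $2n$ mass distributions $\mu_1,\ldots,\mu_n,\eta_1,\ldots,\eta_n$ used in the lemma should be polynomial time. The only nontrivial point is that each $\mu_i$ and each $\eta_i$ has to be a \emph{computable} mass distribution in the sense of the paper, so I need a polynomial-time procedure that, given a line arrangement $A=(\ell_1,\ldots,\ell_n)$, returns $\mu_i(R^\pm(A))$ and $\eta_i(R^\pm(A))$. For this I compute the at most $2n$ intersections of $A$ with the moment curve $\gamma$, sort them along $\gamma$, and observe that the resulting arcs alternate between $R^+(A)$ and $R^-(A)$, since crossing any $\ell_j$ flips the parity used to define $R^\pm(A)$. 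Projecting these arcs through $\pi$ yields disjoint subintervals of $[0,1]$ whose $v_i$-values sum to $\mu_i(R^+(A))$ (and similarly for $R^-(A)$); doing the same with $\varphi$ gives $\eta_i(R^\pm(A))$. Continuity in $A$ inherits from continuity of $v_i$ and of the intersection/projection maps.

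Next I would verify the backward direction: given a solution arrangement $A$ to the constructed \textsc{PizzaCutting} instance, produce a solution of the original \textsc{ConsensusHalving} instance in polynomial time. By the pigeonhole step of the lemma, one of the two arcs $I_1$ and $I_2$ of $\gamma$ meets $A$ in at most $n$ points, which one can be determined by simply counting intersections; projecting those intersections through the corresponding map ($\pi$ or $\varphi$) yields $n$ cut points in $[0,1]$ which, by the identity $v_i(\mathcal{I}^+)=\mu_i(R^+(A))=\mu_i(R^-(A))=v_i(\mathcal{I}^-)$ established in the lemma, simultaneously bisect $v_1,\ldots,v_n$.

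Combining both directions with the known FIXP-hardness of \textsc{ConsensusHalving} \cite{ExactConsensus} gives the claim. The one thing I expect to have to watch is matching this ``polynomial-time many-one reduction with polynomial-time solution recovery'' against the precise reduction notion under which FIXP-hardness of \textsc{ConsensusHalving} is stated in \cite{ExactConsensus}; this should be a bookkeeping step rather than a real obstacle, since FIXP is standardly closed under exactly such reductions.
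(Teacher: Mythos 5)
Your proposal is correct and follows essentially the same route as the paper: the paper's proof is just the observation that the reduction implicit in the lemma (ConsensusHalving to PizzaCutting via the moment curve) is computable in polynomial time in both the instance-mapping and solution-recovery directions, so FIXP-hardness of \textsc{ConsensusHalving} transfers. You simply spell out the details (computability of the masses $\mu_i,\eta_i$ and the pigeonhole-based recovery of the cut points) that the paper leaves implicit.
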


In the discrete setting, the above proof can be phrased even simpler: for each point $x$ in $[0,1]$, we just define two points $(x,x^2)$ and $(x+2,(x+2)^2)$.
As \textsc{NecklaceSplitting} is PPA-hard \cite{GoldbergNecklace}, we get that

\begin{corollary}\label{cor:ppahard}
\textsc{DiscretePizzaCutting} is PPA-hard.
\end{corollary}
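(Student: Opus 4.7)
The plan is to mirror, in the discrete setting, the reduction used in the preceding lemma. I would reduce from \textsc{NecklaceSplitting}, which is PPA-hard by \cite{GoldbergNecklace}, to \textsc{DiscretePizzaCutting}. Given $n$ point sets $P_1,\ldots,P_n$ in $[0,1]$, I would build $2n$ planar point sets by setting
\[
Q_i = \{(x, x^2) : x \in P_i\}, \qquad Q_i' = \{(x+2, (x+2)^2) : x \in P_i\}
\]
for $i=1,\ldots,n$. All points lie on the parabola $y=x^2$, and any line meets this parabola in at most two points, so no three of the constructed points are collinear; the instance is in general position for free. The construction is clearly polynomial time.

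Next I would process the output: let $A=(\ell_1,\ldots,\ell_n)$ be an arrangement returned by \textsc{DiscretePizzaCutting} that bisects each $Q_i$ and each $Q_i'$. The lines meet the parabola in at most $2n$ points total, split between $x$-coordinates in $[0,1]$ and $[2,3]$, so by pigeonhole one of the two sides carries at most $n$ intersections. Without loss of generality this is the $[0,1]$ side, and I would take the $x$-coordinates there, $c_1<\cdots<c_k$ with $k\le n$, as the candidate necklace cuts. As in the lemma, traversing the parabola over $[0,1]$ the side $R^+(A)/R^-(A)$ flips exactly at the $c_j$'s, so the $R^\pm$-partition of $Q_i$ corresponds under $x$-projection to the alternating interval partition of $P_i$ by $c_1,\ldots,c_k$. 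Hence bisection of $Q_i$ by $A$ yields bisection of $P_i$ by $c_1,\ldots,c_k$, and the floor-rounding for odd $|P_i|$ is inherited from a pizza point of $Q_i$ lying on a line. If the target problem insists on exactly $n$ cuts, I would pad with pairs of dummy cuts placed in empty gaps between consecutive necklace points; such pairs toggle the label on a point-free sub-interval and leave every bisection undisturbed, and an odd shortfall can be absorbed by a single cut at the boundary of $[0,1]$.

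I do not anticipate a serious obstacle, since this is a direct discrete instantiation of the lemma's continuous argument: the moment curve supplies general position automatically, and the pigeonhole and parity steps are routine. The only mild care is in the padding argument to reach exactly $n$ cuts, but any finite point set on $[0,1]$ leaves gaps where dummy cuts can be inserted in pairs without affecting bisections.
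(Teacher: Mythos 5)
Your reduction is exactly the paper's: the paper proves PPA-hardness by reducing from \textsc{NecklaceSplitting}, placing for each necklace point $x$ the two points $(x,x^2)$ and $(x+2,(x+2)^2)$ on the parabola and then extracting the cuts from the side of the parabola hit by at most $n$ lines via the same pigeonhole argument from the preceding lemma. Your write-up just spells out the details (general position for free, projection of the $R^\pm$-partition, handling of points on cut lines, padding to exactly $n$ cuts) that the paper leaves implicit.
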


Clearly, the construction in the proof above also works for more than $n$ valuation functions.
In \cite{ExactConsensus}, it was shown that deciding whether $n+1$ valuation functions can be bisected with $n$ cuts is $\exists\mathbb{R}$-hard.
It thus follows that it is $\exists\mathbb{R}$-hard to decide whether $2n+2$ masses can be bisected by $n$ lines.
However, in the case where $n$ is even, we can also only use $\pi$ and map all valuation functions to a single interval on $\gamma$, which analogously proves the Hobby-Rice theorem for even $n$.
From an asymptotic point of view, this restriction to even values does not matter, so using this reduction, we get the following slightly stronger statement.

\begin{corollary}
\textsc{PizzaCuttingDecision} is $\exists\mathbb{R}$-hard.
\end{corollary}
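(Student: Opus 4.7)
The plan is to reduce from \textsc{ConsensusHalvingDecision}, which is $\exists\mathbb{R}$-hard by \cite{ExactConsensus}. The reduction mirrors the Hobby-Rice argument in the preceding lemma, except that I use only a single copy of the moment curve on $\gamma$ rather than two shifted copies. Given an instance with $2k+1$ valuation functions $v_1,\ldots,v_{2k+1}$ on $[0,1]$ (asking whether $2k$ cuts suffice), I define $2k+1$ mass distributions by $\mu_i(h):=v_i(\pi(h))$ for every half-plane $h$, and ask whether they can be simultaneously bisected by $k$ lines. Since $k$ lines intersect $\gamma$ in at most $2k$ points, there is no longer any pigeonhole step: every pizza cut gives rise to at most $2k$ consensus cuts and vice versa.

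For the forward direction, an arrangement $A$ of $k$ lines bisecting the $\mu_i$ projects to at most $2k$ values $t_1<\cdots<t_{2k}$ on the $x$-axis, and the same calculation as in the Hobby-Rice reduction applied to the single interval shows that these form a consensus halving of the $v_i$. For the backward direction, given cuts $c_1<\cdots<c_{2k}$ that bisect the valuations, I would pair them up and let $\ell_j$ be the line through $(c_{2j-1},c_{2j-1}^2)$ and $(c_{2j},c_{2j}^2)$. Since $\gamma$ is convex, the arc of $\gamma$ above $[c_{2j-1},c_{2j}]$ lies below $\ell_j$; orienting each $\ell_j$ so that its positive side is the one containing this arc, a point $(x,x^2)\in\gamma$ is on the positive side of exactly one line if $x\in\bigcup_j[c_{2j-1},c_{2j}]$ and on the positive side of none otherwise, so it lies in $R^-(A)$ precisely on the ``inside'' part of the consensus halving. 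Hence $A$ bisects the $\mu_i$ if and only if the $c_j$ bisect the $v_i$.

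The main obstacle is the parity restriction: the backward pairing only works when the number of consensus cuts is even. I would address this by observing that $\exists\mathbb{R}$-hardness of \textsc{ConsensusHalvingDecision} persists after restricting to instances whose number of cuts is even; a simple padding argument (e.g., appending one symmetric valuation that forces an extra cut) converts any odd-cut instance to an even-cut one in polynomial time, so nothing is lost asymptotically. This matches the observation in the preceding paragraph that the even-$n$ restriction does not affect the asymptotic hardness.
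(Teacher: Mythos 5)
Your proposal is correct and follows essentially the same route as the paper: reduce from the $\exists\mathbb{R}$-hard decision version of \textsc{ConsensusHalving} restricted to an even number of cuts (justified by padding), mapping all valuations to a single copy of the moment curve so that $k$ lines correspond to $2k$ cuts. You additionally spell out the backward direction (pairing consecutive cuts into chords of the parabola), which the paper leaves implicit, but this is the same construction.
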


Finally, it was shown in \cite{NecklaceNP1, NecklaceNP2} that finding the minimal number of cuts required to split a necklace is NP-hard.
We again get the analogous result for discrete pizza cuttings.

\begin{corollary}
Finding the minimal number of lines that simultaneously bisect a family of $2n$ point sets is NP-hard.
\end{corollary}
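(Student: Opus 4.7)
The plan is to reduce from the NP-hard problem of computing the minimum number of cuts in necklace splitting established in \cite{NecklaceNP1, NecklaceNP2}, reusing the doubled moment curve construction that already gave Corollary \ref{cor:ppahard}. Given $n$ point sets $Q_1,\ldots,Q_n$ on $[0,1]$, we form $2n$ planar point sets $P_1,\ldots,P_{2n}$ by placing each $q \in Q_i$ at both $(q,q^2)$ and $(q+2,(q+2)^2)$ on the moment curve. The key claim to establish is that the minimum number of lines needed to bisect $P_1,\ldots,P_{2n}$ equals the minimum number of cuts needed to bisect $Q_1,\ldots,Q_n$, from which NP-hardness of the pizza problem is immediate.

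One direction is precisely the pigeonhole argument already used in this section: any arrangement of $L$ lines bisecting the $P_i$ has at most $2L$ intersection points with the moment curve, so at least one of the intervals $[0,1]$ or $[2,3]$ contains at most $L$ of them, and projecting these intersections to the $x$-axis gives at most $L$ cuts that bisect every $Q_i$.

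For the converse direction, given $K$ bisecting cuts $p_1 < \cdots < p_K$ for the necklace, the plan is to take the $K$ lines $\ell_1,\ldots,\ell_K$ where $\ell_i$ passes through the moment-curve points $(p_i,p_i^2)$ and $(p_i+2,(p_i+2)^2)$. A direct computation from the factorization
\[
x^2 - \bigl((2p_i+2)x - p_i(p_i+2)\bigr) = (x-p_i)\bigl(x-(p_i+2)\bigr)
\]
will show that a moment-curve point $(x,x^2)$ with $x \in [0,1]$ lies below $\ell_i$ iff $x > p_i$, whereas a point $(x,x^2)$ with $x \in [2,3]$ lies below $\ell_i$ iff $x-2 < p_i$. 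In both copies, the parity of ``number of lines the point lies below'' therefore flips at exactly the positions $p_1,\ldots,p_K$, so the induced $\pm$-coloring on each copy agrees with the original bisecting coloring of the necklace, up to a global sign flip on the $[2,3]$ copy when $K$ is odd. Since flipping all labels preserves bisections, every $P_i$ is bisected by the $K$ lines.

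The main subtlety I expect is precisely this parity-matching across the two copies of $[0,1]$: one has to verify that the orientation of the coloring on the $[2,3]$ side lines up (possibly up to a global swap) with the original necklace split, and this is handled entirely by the sign computation above. Once this is in place, the reduction is manifestly polynomial-time, and NP-hardness of necklace min-cuts transfers directly.
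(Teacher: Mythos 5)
Your proposal is correct and follows essentially the same route as the paper, which proves this corollary exactly by transferring the NP-hardness of minimizing necklace cuts through the doubled moment-curve construction already used for the PPA- and $\exists\mathbb{R}$-hardness results (pigeonhole/projection in one direction, lines through corresponding cut positions in the other, with your sign computation handling the parity across the two copies). The only step you gloss over is that the converse direction implicitly assumes the given minimal necklace solution alternates labels at every cut (so that the coloring induced by your lines coincides with it up to a global flip); this is a standard normalization -- adjacent intervals with equal labels can be merged, and the NP-hard necklace instances have even-size classes so no cut need pass through a point -- but it deserves a sentence.
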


\section{Well-separated point sets}\label{sec:separated}

In this section we consider well-separated mass distributions and point sets.
We generalize the $\alpha$-Ham-Sandwich theorem to pizza cuttings.

\begin{theorem}\label{thm:alpha_pizza}
Let $\mu_1,\ldots,\mu_{nd}$ be $nd$ well-separated mass distributions in $\mathbb{R}^d$.
Given a vector $\alpha=(\alpha_1,\ldots,\alpha_{nd})$, with each $\alpha_i\in [0,1]$, there exists an arrangement $A$ of $n$ oriented hyperplanes such that for each $i\in\{1,\ldots,nd\}$ we have $\mu_i(R^+(A))=\alpha_i\mu_i(\mathbb{R}^d)$.
\end{theorem}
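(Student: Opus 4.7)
The plan is to reduce the theorem to the $\alpha$-Ham-Sandwich theorem (the $n=1$ case) by parameterizing candidate arrangements. I partition $\{1,\ldots,nd\}$ into $n$ groups $G_1,\ldots,G_n$ of size $d$. The well-separation of the full $nd$-tuple implies that every $d$-tuple $(\mu_i)_{i\in G_j}$ is well-separated, so the $\alpha$-Ham-Sandwich theorem produces, for every $\beta\in[0,1]^{nd}$, a unique oriented hyperplane $h_j(\beta)$ satisfying $\mu_i(h_j(\beta)^{+})=\beta_i\,\mu_i(\mathbb{R}^d)$ for all $i\in G_j$. Setting $A(\beta):=(h_1(\beta),\ldots,h_n(\beta))$ gives a continuous parameterization of candidate arrangements by the cube $[0,1]^{nd}$ and a continuous fraction map
\[
F:[0,1]^{nd}\to[0,1]^{nd},\qquad F(\beta)_i:=\mu_i(R^+(A(\beta)))/\mu_i(\mathbb{R}^d).
\]
The theorem is equivalent to the surjectivity of $F$.

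To establish surjectivity I would use a homotopy/degree argument. First I would show that the space of well-separated $nd$-tuples of probability mass distributions in $\mathbb{R}^d$ is path-connected: well-separation is an open condition, and any two configurations can be joined by a continuous path after a standard general-position perturbation. Then I would exhibit a reference configuration for which $F$ is visibly a homeomorphism of $[0,1]^{nd}$: place the masses in small balls, clustered so that the $G_j$-cluster lies far from the $G_k$-cluster for each $k\ne j$. In this regime, each hyperplane $h_j(\beta)$ remains near its own cluster and does not meet any mass outside $G_j$; consequently every $\mu_i$ with $i\in G_j$ is cut only by $h_j$ and lies on a fixed side of every other hyperplane, so $F$ splits as a product of $n$ independent single-group maps, each of which is a homeomorphism of $[0,1]^d$ by the $\alpha$-Ham-Sandwich theorem. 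Homotopy invariance of the Brouwer degree then transports this to the given configuration, so $F$ has nonzero degree and in particular is surjective.

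The main obstacle is making the degree argument fully rigorous: since $F$ maps the cube $[0,1]^{nd}$ to itself but does not a priori send its boundary to the boundary, one must either invoke a relative or $\bmod\,2$ variant of the degree (after a one-point compactification, or by working in the space of oriented hyperplanes and extending $F$) or else argue openness of $\operatorname{Im}(F)$ directly via invariance of domain together with a non-degeneracy estimate on the Jacobian of $F$. Either route ultimately hinges on the uniqueness half of the $\alpha$-Ham-Sandwich theorem and a transversality argument controlling how each hyperplane $h_k(\beta)$ cuts masses outside $G_k$.
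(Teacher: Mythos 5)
Your first step---splitting the $nd$ masses into $n$ groups of $d$ and taking the (unique) $\alpha$-Ham-Sandwich cut of each group---is exactly the paper's, but the degree-theoretic machinery you erect on top of it is both incomplete and unnecessary, because you miss the one consequence of well-separation that the paper actually uses: a hyperplane that meets the convex hulls of the supports of the $d$ masses in a group $G_j$ cannot meet the convex hull of the support of any mass outside $G_j$ (a hyperplane transversal to $d+1$ of the hulls yields, via a Radon partition of one point from each intersection inside that $(d-1)$-dimensional hyperplane, two complementary subfamilies whose hulls meet, contradicting well-separation). Consequently, for every $\beta$ in the open cube, each $\mu_i$ with $i\in G_j$ lies strictly on one side of every $h_k(\beta)$ with $k\neq j$, so $F(\beta)_i\in\{\beta_i,1-\beta_i\}$, and which of the two occurs is determined by the parity of the number of hyperplanes $h_k(\beta)$, $k\neq j$, having $\mathrm{supp}\,\mu_i$ on their positive side; this parity is locally constant in $\beta$ and hence constant on the connected cube. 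In other words, the product decomposition you establish only for your special ``reference configuration'' already holds verbatim for the \emph{given} configuration, and the desired preimage of $\alpha$ can simply be written down by flipping $\alpha_i$ to $1-\alpha_i$ where the parity is odd. This is precisely the paper's proof: it fixes an auxiliary arrangement $B$ of hyperplanes, one per group, meeting the supports of its group, reads off the sides to define $\alpha_i':=\alpha_i$ or $1-\alpha_i$, and takes the $\alpha'$-Ham-Sandwich cuts. No homotopy of configurations and no degree argument are needed.

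Beyond being avoidable, your degree argument has genuine unresolved gaps, some of which you acknowledge. The map $F$ need not even be well defined or continuous on the closed cube, since uniqueness (and tameness) of the $\alpha$-Ham-Sandwich cut fails when some $\beta_i\in\{0,1\}$, and a degree count would in any case not produce boundary targets $\alpha_i\in\{0,1\}$; the boundary-behaviour issue you flag is therefore real and not obviously repairable by a mod $2$ or relative variant without further work. The path-connectedness of the space of well-separated $nd$-tuples is asserted, not proved. Finally, the reference configuration as described does not do what you claim: hyperplanes are unbounded, so placing the clusters merely ``far from'' one another does not prevent $h_j(\beta)$ from slicing through another cluster---what you would have to impose is exactly well-separation of all $nd$ small balls, at which point the decoupling observation above applies directly to the original instance and the entire homotopy-invariance apparatus becomes superfluous.
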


\begin{proof}
By the $\alpha$-Ham-Sandwich theorem \cite{AlphaHSMasses}, for any $d$ well-separated mass distributions $\mu_1,\ldots,\mu_d$ and any vector $(\alpha_1,\ldots,\alpha_d)$, there exists a unique single hyperplane cutting each mass distribution in the required ratio.
By the definition of well-separatedness, this hyperplane does not intersect the support of any other mass distribution.
Partition the point set into $n$ parts of $d$ point sets each.
For each part, pick some oriented hyperplane which intersects the support of all masses in this part.
This defines an arrangement $B$ of oriented hyperplanes.
For each mass $\mu_i$, consider the intersection of its support with the positive side of the hyperplane intersecting it, as well as with $R^+(B)$.
If these two coincide, set $\alpha'_i:=\alpha_i$, otherwise set $\alpha'_i:=1-\alpha_i$.
Now, taking the $\alpha$-Ham-Sandwich cut for the vectors $\alpha'$ gives the required arrangement.
\end{proof}

We call such an arrangement an \emph{$\alpha$-Pizza cut}.
From the discrete $\alpha$-Ham-Sandwich theorem \cite{AlphaHSPoints}, we analogously get the discrete version of the above.

\begin{corollary}
Let $P_1,\ldots,P_{nd}$ be $nd$ well-separated point sets in $\mathbb{R}^d$.
Given a vector $\alpha=(k_1,\ldots,k_{nd})$, where each $k_i$ is an integer with $0\leq k_i\leq |P_i|$, there exists an arrangement $A$ of $n$ oriented hyperplanes such that for each $i\in\{1,\ldots,nd\}$ we have $|P_i\cap R^+(A)|=k_i$.
\end{corollary}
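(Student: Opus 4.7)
The plan is to follow the proof of Theorem \ref{thm:alpha_pizza} verbatim, replacing the continuous $\alpha$-Ham-Sandwich theorem with its discrete counterpart \cite{AlphaHSPoints}. The discrete version guarantees, for any $d$ well-separated point sets $Q_1,\ldots,Q_d$ in $\mathbb{R}^d$ and any integer target vector $(m_1,\ldots,m_d)$ with $0\le m_i\le |Q_i|$, a unique oriented hyperplane placing exactly $m_i$ points of $Q_i$ on its positive side; moreover, by well-separatedness of the full family $P_1,\ldots,P_{nd}$, this hyperplane avoids the convex hulls of all point sets outside its group.

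First I would partition $\{P_1,\ldots,P_{nd}\}$ into $n$ groups of $d$, and for each group pick some oriented hyperplane whose affine span meets the convex hulls of all $d$ sets in that group. The union of these $n$ hyperplanes is a preliminary arrangement $B$. For each $P_i$, I would compare $R^+(B)\cap P_i$ with the intersection of $P_i$ with the positive side of the hyperplane of $B$ associated to the group containing $P_i$: if they coincide I set $k'_i:=k_i$, otherwise $k'_i:=|P_i|-k_i$. Then I apply the discrete $\alpha$-Ham-Sandwich theorem group by group with targets $(k'_{i_1},\ldots,k'_{i_d})$, orienting each resulting hyperplane to match the orientation of the corresponding hyperplane of $B$. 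Collecting the $n$ oriented hyperplanes gives the candidate arrangement $A$.

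The main thing to verify is that $|P_i\cap R^+(A)|=k_i$. This reduces to the parity argument used in Theorem \ref{thm:alpha_pizza}: by well-separatedness, each $P_i$ lies wholly on one fixed side of every hyperplane of $A$ outside its own group, and that side agrees with the corresponding side of $B$ (both hyperplanes intersect the same collection of convex hulls and are oriented consistently, and $P_i$'s convex hull lies disjoint from both). Hence the parity of the number of $R_j^+$ that contains a point $p\in P_i$ is determined by which side of its own group's hyperplane $p$ lies on, shifted by a global constant depending only on $i$; the flipping rule $k_i\leftrightarrow|P_i|-k_i$ is exactly what absorbs that constant. The only real subtlety I expect is this bookkeeping around orientations and parities, but since the discrete $\alpha$-Ham-Sandwich hyperplane is unique and its positive side can be freely labeled, the argument goes through as in the continuous case.
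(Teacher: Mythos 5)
Your proposal is correct and follows essentially the same route as the paper: the paper derives this corollary by running the proof of Theorem \ref{thm:alpha_pizza} verbatim with the discrete $\alpha$-Ham-Sandwich theorem of \cite{AlphaHSPoints} in place of the continuous one, using exactly your auxiliary arrangement $B$ and the flip $k_i\leftrightarrow|P_i|-k_i$ to fix orientations. Your additional parity bookkeeping just spells out what the paper leaves implicit, so there is nothing to change.
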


In \cite{AlphaHS}, it was shown that the problem of computing an $\alpha$-Ham-Sandwich cut for point sets is in UEOPL.
As our $\alpha$-Pizza cuts are just a union of $\alpha$-Ham-Sandwich cuts, their result generalizes to our setting.

\begin{corollary}
The problem of computing an $\alpha$-Pizza cut for point sets is in UEOPL.
\end{corollary}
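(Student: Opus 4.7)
The plan is to algorithmize the construction in the proof of Theorem~\ref{thm:alpha_pizza}, reducing the computation of an $\alpha$-Pizza cut to $n$ instances of the $\alpha$-Ham-Sandwich problem, each in UEOPL by \cite{AlphaHS}, together with polynomial-time pre- and post-processing. First, partition the $nd$ input point sets into $n$ canonical groups of $d$ each (say, in input order); well-separatedness is inherited by each group. For each group $j$, compute in polynomial time an arbitrary oriented hyperplane $h_j$ whose support meets all $d$ convex hulls in the group. Such a hyperplane exists by well-separatedness, and can be found by linear programming. This yields the initial arrangement $B=(h_1,\ldots,h_n)$ used in the proof of Theorem~\ref{thm:alpha_pizza}.

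The second step is the sign-flipping: for each $P_i$ in group $j$, determine whether $P_i\cap R_j^+$ and $P_i\cap R^+(B)$ coincide, and set $\alpha_i'$ to either $k_i$ or $|P_i|-k_i$ accordingly. This reduces to counting, for each hyperplane $h_{j'}$ with $j'\neq j$, whether it places the support of $P_i$ on its positive or negative side, which is polynomial-time since well-separatedness ensures the support lies entirely on one side. One then invokes the UEOPL algorithm of \cite{AlphaHS} on each group with its modified target vector $\alpha'$, obtaining the unique $\alpha'$-Ham-Sandwich cut. The union of the $n$ resulting cuts is, by Theorem~\ref{thm:alpha_pizza}, the required $\alpha$-Pizza cut.

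The only non-routine point is that making $n$ parallel UEOPL queries still leaves the combined problem in UEOPL, and this is where I would spend the most care. A standard composition handles it: form a single UEOPL instance whose states encode a tuple of current vertices, one per sub-instance, with a successor edge that advances the first coordinate whose sub-instance has not yet reached its maximum, and with potential given by the pair \emph{(group index of the advancing coordinate, within-group potential)} under lexicographic order. The composite instance has polynomial size because $n$ and each sub-instance are polynomial, each state has at most one predecessor and one successor by construction, and the potential strictly increases along edges. Its unique maximum-potential vertex simultaneously encodes the $n$ individual maxima, from which the $\alpha$-Pizza cut is read off directly. The main obstacle is thus the bookkeeping needed to verify the UEOPL axioms for this composite, not any geometric step.
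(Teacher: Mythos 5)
Your overall route is the same as the paper's: split the $nd$ sets into $n$ groups of $d$, observe that an $\alpha$-Pizza cut is just the union of the $n$ unique $\alpha$-Ham-Sandwich cuts of the groups (Theorem~\ref{thm:alpha_pizza}), and invoke the UEOPL membership of the $\alpha$-Ham-Sandwich problem from \cite{AlphaHS}; the preprocessing (an initial arrangement meeting each group, the sign-flipped vector $\alpha'$) is indeed routine polynomial-time work. The genuine gap sits exactly in the step you yourself single out: the composition of the $n$ UniqueEOPL sub-instances into one. As described, your successor circuit acts on \emph{all} tuples $(x_1,\ldots,x_n)$ by advancing the first coordinate that has not reached the end of its line. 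This creates exponentially many spurious lines: any tuple in which some later coordinate sits at an interior vertex of its sub-instance while the earlier coordinates are at their starts is the start of a line different from the canonical one. In the promise-free UniqueEOPL problem such extra starts of lines (and the resulting interleaving of potential values between distinct lines) are themselves admissible solutions, and membership in UEOPL requires that \emph{every} solution of the constructed instance, violations included, maps back in polynomial time to an $\alpha$-Pizza cut. A spurious start of a line reveals nothing about the cuts of any group, so the instance as you define it does not witness containment. A smaller issue of the same kind: ``has reached its maximum'' is not locally decidable from the circuits; only ``is an end of line'' is, and the two coincide only under the uniqueness promise of the sub-instances.

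The fix needs a concrete idea rather than bookkeeping: restrict the line structure to \emph{canonical} tuples (all coordinates before the active one at the ends of their lines, all coordinates after it at their starts --- a locally checkable condition), let $S$ and $P$ fix every other tuple, and argue that the instances produced by \cite{AlphaHS} satisfy the uniqueness promise (or that any of their violations already encodes the cut of its group), so that the composed instance has a single line whose end decodes to all $n$ cuts simultaneously. Note that the paper sidesteps the composition entirely: as explained in the remark following the corollary, it runs the rotation-based path of \cite{AlphaHS} directly on whole arrangements, starting from an arbitrary arrangement, which yields one potential line in arrangement space without any product construction and without computing $\alpha'$ up front.
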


\begin{remark}
The computation of the vector $\alpha'$ does not directly translate into the setting of UEOPL.
However, in \cite{AlphaHS}, they start with an arbitrary hyperplane, and then rotate it in a well-defined fashion to the solution.
This immediately translates to our setting: we just start with an arbitrary arrangement, and the choice of the direction of rotation works analogously to the choice in \cite{AlphaHS}.
\end{remark}

Further, Bereg \cite{BeregAlpha} has shown that an $\alpha$-Ham-Sandwich cut for $d$ point sets of $m$ points total in $\mathbb{R}^d$ can be computed in time $m2^{O(d)}$.
In particular, if $d$ is fixed, this algorithm runs in linear time.
Again, we get the same result for $\alpha$-Pizza cuts.

\begin{corollary}
Let $P_1,\ldots,P_{nd}$ be well-separated points sets in $\mathbb{R}^d$ with $\sum_{i=1}^{nd}|P_i|=m$.
Then an $\alpha$-Pizza cut for $P_1,\ldots,P_{nd}$ can be computed in time $m2^{O(d)}$.
\end{corollary}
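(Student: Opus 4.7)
The plan is to replay the proof of Theorem~\ref{thm:alpha_pizza} algorithmically, invoking Bereg's algorithm \cite{BeregAlpha} in place of the existential $\alpha$-Ham-Sandwich theorem. First I would arbitrarily partition the $nd$ input sets into $n$ groups $G_1, \ldots, G_n$ of $d$ sets each, writing $m_j = \sum_{P_i \in G_j} |P_i|$ so that $\sum_j m_j = m$.

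To instantiate the auxiliary arrangement $B = (h_1, \ldots, h_n)$ used in the proof of Theorem~\ref{thm:alpha_pizza}, I would take each $h_j$ to be the hyperplane returned by a first call to Bereg on $G_j$ with any fixed target vector (for instance $k_i \mapsto \lfloor |P_i|/2 \rfloor$). By well-separatedness, $h_j$ avoids every support outside $G_j$, so for each point set $P_i \in G_{j_i}$ the correction bit $\sigma_i \in \{0,1\}$, counting modulo $2$ the number of hyperplanes $h_j$ with $j \neq j_i$ on whose positive side the support of $P_i$ lies, is well defined; I would compute it in $O(nd)$ time by testing a single representative $p_i \in P_i$ against the $n-1$ relevant hyperplanes. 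Setting $\alpha'_i := k_i$ if $\sigma_i = 0$ and $\alpha'_i := |P_i| - k_i$ otherwise produces the adjusted vector $\alpha'$ prescribed by Theorem~\ref{thm:alpha_pizza}.

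Next, I would invoke Bereg a second time on each group $G_j$ with the restricted vector $\alpha'|_{G_j}$, obtaining a hyperplane $\ell_j$; the final arrangement $A = (\ell_1, \ldots, \ell_n)$ is the desired $\alpha$-pizza cut by exactly the argument used to prove Theorem~\ref{thm:alpha_pizza}. The running time is dominated by the $2n$ Bereg calls, contributing $\sum_j 2 m_j 2^{O(d)} = m\, 2^{O(d)}$ in total; the partitioning and parity bookkeeping give lower-order overhead absorbed into this bound.

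The step requiring care, rather than a technical obstacle, is verifying that the correction bits $\sigma_i$ computed from $B$ remain valid when $B$ is replaced by the final $A$. This is precisely the well-separatedness observation underlying Theorem~\ref{thm:alpha_pizza}: every $\alpha$-Ham-Sandwich cut for the sets in $G_j$ is disjoint from every support outside $G_j$, so $\ell_j$ lies on the same side of each outside support as $h_j$, and the parities transfer unchanged.
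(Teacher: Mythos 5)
Your proposal is correct and follows essentially the same route as the paper's proof: partition the $nd$ sets into $n$ groups of $d$, compute a corrected target vector $\alpha'$ using an auxiliary arrangement and the well-separatedness observation from Theorem~\ref{thm:alpha_pizza}, then invoke Bereg's algorithm once per group, for total time $m2^{O(d)}$. The only cosmetic deviations are that you build the auxiliary arrangement $B$ with an extra round of Bereg calls (the paper takes arbitrary hyperplanes meeting the supports of each group, but your choice stays within the same bound) and that your parity rule for $\sigma_i$ may yield the complementary labeling, i.e.\ $|P_i\cap R^-(A)|=k_i$ simultaneously for all $i$, which is remedied by reorienting a single hyperplane.
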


\begin{proof}
Partition the point sets into parts $P_{(i-1)d+1},\ldots,P_{id}$, for $i\in\{1,\ldots,n\}$.
Compute the vector $\alpha'$ in time $O(nd)$.
As $m\geq nd$, the runtime of the second part dominates the total runtime.
For each $i\in\{1,\ldots,n\}$, use Bereg's algorithm to compute the $\alpha$-Ham-Sandwich cut for $P_{(i-1)d+1},\ldots,P_{id}$.
It follows that the solution is an $\alpha$-Pizza cut.
The runtime of the algorithm is
\[\sum_{i=1}^n (|P_{(i-1)d+1}|+,\ldots,+|P_{id}|)2^{O(d)}=\sum_{i=1}^{nd}|P_i|2^{O(d)}=m2^{O(d)}.\]
\end{proof}

\section{Containment results}\label{sec:containment}

In this section, we prove that \textsc{DiscretePizzaCutting} is in PPA.
We do this by giving a new proof of the discrete planar pizza cutting theorem, which allows for an algorithm in PPA.
Before we go into the details of the proof, we briefly sketch the main ideas.
The main structure of our proof is similar to and inspired by the original proof of Hubard and Karasev \cite{Pizza1}, but we replace their topological arguments with easier ones that only use the combinatorics of point sets, but hence do not work for the more general case of mass distributions.

The main idea is that we continuously transform well-separated point sets into the point sets which we want to bisect.
In the beginning of this process, we have several bisections, namely one for each partition of the labels of the point sets into pairs, and this number is odd.
During the process, we pull these bisecting arrangements along.
Every time the orientation of some triple of points changes, it can happen that one of the arrangements is not bisecting anymore.
The main step in the proof is, that in these cases, we can always slightly change this arrangement so that it is again bisecting, or that there is a second arrangement that also is not bisecting anymore.
In other words, some bisections might vanish during the process, but if they do, then they always vanish in pairs.
This step is also where our proof differs from the one by Hubard and Karasev.

Once we have this, the remainder of the proof is rather simple: as we started with an odd number of bisections, and they always vanish in pairs, the number of bisections is always odd, and thus in particular at least 1.
Further, we can build a graph where each vertex corresponds to a point set in our process with an arrangement, where the vertices are connected whenever one arrangement is the pulled along version of the other one, or when both point sets are the same and the arrangements are the two arrangements that vanish at this point of the process.
(In the end, some of these connections will be paths instead of single edges.)
Adding an additional vertex which we connect to all starting arrangements, we get a graph in which the only odd-degree vertices are this additional vertex and the final solutions.

\subsection{A proof of the discrete planar pizza cutting theorem}

We now proceed to give a detailed proof of the discrete planar pizza cutting theorem (Corollary \ref{cor:discreteplanar}).
Let $P_1,\ldots,P_{2n}$ be the point sets we wish to bisect.
Let further $m:=\sum_{i=1}^{2n}|P_i|$.

\begin{lemma}\label{lem:odd}
We may assume that each $P_i$ contains an odd number of points.
\end{lemma}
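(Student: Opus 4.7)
My plan is to reduce to the all-odd case by augmenting each even-sized set with one auxiliary point. For each $P_i$ with $|P_i|=2k_i$ even, I would place a new point $p_i$ close to some chosen $q_i\in P_i$, at distance $\varepsilon>0$ in a generically chosen direction, and form $P'_i:=P_i\cup\{p_i\}$ so that $|P'_i|=2k_i+1$ is odd; for the odd-cardinality indices set $P'_j:=P_j$. For generic choices of the $p_i$'s the augmented family $(P'_j)_j$ still lies in general position.

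Applying the all-odd case of the discrete planar pizza cutting theorem to $(P'_j)_j$ yields a bisecting arrangement $A$. Since each $|P'_j|$ is odd, exactly one point of each $P'_j$ lies on a line of $A$. If for every modified index $i$ this on-line point happens to be $p_i$ itself, then simply dropping the auxiliary points yields a bisection of the original $(P_j)_j$: each $P_i$ has $k_i$ points on each side of $A$ and no point on a line, exactly as required. This is the favorable case.

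In the remaining case, the on-line point of some $P'_i$ is a point $q\in P_i$ rather than $p_i$. Here I would perturb $A$ locally by translating the offending line by $O(\varepsilon)$ so that it passes through $p_i$ instead of $q$. Since $p_i$ was placed within distance $\varepsilon$ of $q_i$, such a translation is small; by general position and for $\varepsilon$ sufficiently small, it sweeps no point of any other $P'_j$ with $j\neq i$, so the bisection counts for every other set are unchanged. After the perturbation $p_i$ lies on the line and $q$ has moved to one of the two sides, putting us back in the favorable case.

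The main obstacle I anticipate is ensuring that, after the perturbation, the side counts for $P_i$ itself still come out to $k_i$ on each side. The side on which $q$ ends up is determined by the direction of translation, which in turn is forced by the position of $p_i$ relative to the offending line of $A$. I would address this by a case/genericity argument: given two candidate positions for $p_i$ on opposite sides of $q_i$, at least one of them places $q$ on the side that makes the counts balance after the perturbation. Multiple modified indices are handled sequentially, choosing each successive auxiliary point generically with respect to the previous perturbations, so that the modifications do not interfere with each other.
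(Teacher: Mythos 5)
There is a genuine gap in the case where the on-line point of $P'_i$ is an original point. Your fix assumes that this point is the designated point $q_i$ next to which you placed the auxiliary point $p_i$: only then is "translate the offending line by $O(\varepsilon)$ so that it passes through $p_i$ instead of $q$" a small perturbation. But the bisecting arrangement $A$ comes from the existence theorem as a black box, and the unique point of $P'_i$ lying on a line of $A$ can be \emph{any} point of $P_i$, possibly far from $q_i$ and $p_i$; in that situation no local translation brings the line through $p_i$, and a large motion of the line would change the counts for the other sets. Your attempted repair in the last paragraph --- choosing between two candidate positions of $p_i$ on opposite sides of $q_i$ so that $q$ lands on the balancing side --- is circular: the positions of the auxiliary points must be fixed \emph{before} the existence theorem is invoked, and changing $p_i$ changes the input and hence, uncontrollably, the arrangement $A$ you are trying to correct.

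The paper's proof avoids this entirely and shows the obstacle is illusory: add an \emph{arbitrary} new point $q_i$ to each even set, take a bisecting arrangement of the augmented sets, and remove $q_i$. If the on-line point of $P_i\cup\{q_i\}$ was $q_i$ itself, nothing needs to be done. Otherwise the on-line point is some original $p_i\in P_i$, and after deleting $q_i$ one side is short by exactly one point; one then perturbs the line through $p_i$ slightly (rotating about its other defining point) so that $p_i$ itself drops into the deficient side. Since $|P_i|$ is even, no point of $P_i$ needs to remain on a line, and a sufficiently small rotation crosses no other point, so all other sets stay bisected. The key idea you are missing is that the correction should move the \emph{existing} on-line point off the line into the deficient side, rather than trying to force the line through the auxiliary point; the former works wherever the on-line point happens to be and requires no proximity assumption at all.
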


\begin{proof}
For each point set $P_i$ with an even number of points, add some arbitrary new point $q_i$ to $P_i$, such that the point sets are still in general position.
Take some bisecting arrangement $A$ of the resulting point sets.
As all of these point sets consist of an odd number of points in general position, each of them must contain a point $p_i$ which lies on a line of $A$.
As each line in $A$ can pass through at most two points by the general position assumption, there is exactly one such point in each point set.
Now, remove $q_i$ again.
If $p_i=q_i$, the arrangement still bisects $P_i$.
Otherwise, one side, without loss of generality $R^+$, contains one point too few.
Rotate the line through $p_i$ slightly so that $p_i$ lies in $R^+$.
Now the arrangement again bisects $P_i$.
\end{proof}

So, from now on we may assume that each $P_i$ contains an odd number of points.
Let $Q_1,\ldots,Q_{2n}$ be point sets of the same size, that is, $|Q_i|=|P_i|$, that are well-separated.
Match each point $q\in Q_i$ with a point $p\in P_i$ and consider for each such pair the map $\varphi(t):=tp+(1-t)q$.
These maps define a map $\Phi(t)$, which assigns to each $t$ the point set defined by the $\varphi(t)$'s.

During this process, some orientations of triples of points change.
We may assume, that no two triples change their orientation at the same time.
Further, as each point crosses each line spanned by two other points at most once, the total number of orientation changes is in $O(m^3)$.
Thus, the interval $[0,1]$ is partitioned into $O(m^3)$ subintervals, in each of which the orientations of all triples stay the same, that is, the \emph{order type} is invariant.
At the boundaries of these subintervals, some three points are collinear.

Taking a representative of each subinterval and the point sets at their boundaries, we thus get a sequence of point sets
\[Q_i=P^{(0)}_i,P^{(0.5)}_i,P^{(1)}_i,\ldots,P^{(k)}_i,P^{(k.5)}_i,P^{(k+1)}_i,\ldots,P^{(C)}_i=P_i,\]
where $C\in O(m^3)$ is the number of orientation changes, $P^{(k)}:=\bigcup_{i=1}^{2n}P^{(k)}_i$ is a point set in general position, and $P^{(k.5):=}\bigcup_{i=1}^{2n}P^{(k.5)}_i$ is a point set with exactly one collinear triple.
We will mainly work with the sets $P^{(k)}$, the sets $P^{(k.5)}$ are just used to make some arguments easier to understand.

As argued before, for each $P^{(k)}$, any bisecting arrangement $A$ contains exactly one point $p^{(k)}_i(A)$ of each $P^{(k)}_i$ on one of its lines.
In particular, $A$ is defined by a set of pairs of points $(p^{(k)}_i(A),p^{(k)}_j(A))$, where each pair defines one line of the arrangement.

\begin{lemma}\label{lem:beginning}
For $P^{(0)}$, the number of bisecting arrangements is odd.
\end{lemma}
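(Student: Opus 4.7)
The plan is to choose the well-separated point sets $Q_1, \ldots, Q_{2n}$ so conveniently that bisecting arrangements of $P^{(0)}$ are in bijection with perfect matchings of $\{1, \ldots, 2n\}$, whose count $(2n-1)!!$ is manifestly odd.

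First I would place each $Q_i$ inside a tiny ball $B_i$ of radius $\varepsilon$ around the moment-curve point $(i, i^2)$, in general position. Since no three points on the moment curve are collinear, the distance from each center $(k, k^2)$ to the line through any other two centers is bounded below by a positive constant; for $\varepsilon$ small enough, any line through a point of $Q_i$ and a point of $Q_j$ stays disjoint from every other $B_k$. Well-separatedness is immediate for this configuration.

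For the forward direction, I would take an arbitrary bisecting arrangement $A$ of $P^{(0)}$ and show it arises from a matching. Because $|Q_i|$ is odd for every $i$, the bisection condition forces exactly one point of each $Q_i$ to lie on a line of $A$; by general position, each line of $A$ contains at most two points of $P^{(0)}$. A straightforward count over $2n$ clusters and $n$ lines forces every line of $A$ to contain exactly two such on-line points, drawn from two distinct clusters, which defines a perfect matching $M$. Using tightness of the clusters, for each pair $\{i, j\} \in M$ only the matched line $\ell \in A$ meets $B_i$; every other line has all of $Q_i$ on one of its sides and therefore contributes a constant parity to each point of $Q_i$. A short parity computation then forces $\ell$ to individually bisect $Q_i$, and symmetrically $Q_j$. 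The uniqueness part of the $\alpha$-Ham-Sandwich theorem applied to the well-separated pair $(Q_i, Q_j)$ with $\alpha = (1/2, 1/2)$ pins down $\ell$, so $A$ is completely determined by $M$.

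For the converse, given any perfect matching $M$, Theorem \ref{thm:alpha_pizza} applied with the partition $M$ and $\alpha = (1/2, \ldots, 1/2)$ yields a bisecting arrangement $A_M$ of $P^{(0)}$, namely the union of the $n$ individual $\alpha$-Ham-Sandwich cuts for the matched pairs. The two maps $A \mapsto M$ and $M \mapsto A_M$ are mutually inverse, so the number of bisecting arrangements equals the number $(2n-1)!! = (2n-1)(2n-3)\cdots 3 \cdot 1$ of perfect matchings on $\{1,\ldots,2n\}$, which is odd.

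The main obstacle is the parity step forcing the matched line $\ell$ to individually bisect $Q_i$: this depends on the other $n-1$ lines of $A$ avoiding $B_i$ entirely, which requires the clusters to be tight enough relative to the gaps between lines spanned by pairs of moment-curve centers. Verifying that a single choice of $\varepsilon$ works uniformly for every possible bisecting arrangement, while preserving general position of $P^{(0)}$, is the delicate bookkeeping.
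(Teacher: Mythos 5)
Your proof is correct and takes essentially the same route as the paper: bisecting arrangements of the well-separated starting configuration are in bijection with perfect matchings of the $2n$ index sets, because each line must be the unique Ham--Sandwich cut of a separated pair of odd-size sets, and $(2n-1)!!$ is odd. The only difference is that you build $Q$ explicitly as tiny moment-curve clusters so that every line through two clusters avoids all others and spell out the converse direction in detail, whereas the paper obtains this structure by appealing to well-separatedness and the proof of Theorem~\ref{thm:alpha_pizza}.
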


\begin{proof}
It is known that for 2 separated point sets, each of odd size, there is a unique Ham-Sandwich cut (see e.g.\ \cite{AlphaHSPoints}).
We have seen in the proof of Theorem \ref{thm:alpha_pizza}, that for well-separated point sets, each bisecting arrangement corresponds to $n$ Ham-Sandwich cuts, each for a pair of point sets.
Thus, the number of bisecting arrangements is the same as the number of partitions of $2n$ elements into pairs.
This number is $(2n-1)!!=(2n-1)(2n-3)\cdots 3\cdot 1$, which is a product of odd numbers, and thus odd.
\end{proof}

Let us now follow some arrangement $A$ through the process.
More precisely, Let $A^{(0)}$ be a bisecting arrangement for $P^{(0)}$, defined by pairs of points $(p^{(0)}_i(A),p^{(0)}_j(A))$.
We now consider the sequence of arrangements $A^{(k)}$, where each $A^{(k)}$ is defined by the corresponding pairs of points $(p^{(k)}_i(A),p^{(k)}_j(A))$.
Clearly, if $A^{(k)}$ is bisecting and the orientation change from $P^{(k)}$ to $P^{(k+1)}$ is not a point moving over a line in the arrangement $A^{(k)}$, then $A^{(k+1)}$ is still bisecting.
For the other changes, we need the following lemma.
Here, we say that an arrangement $A^{(k.5)}$ is \emph{almost bisecting} for $P^{(k.5)}$, if it bisects each $P_i^{(k.5)}$ except for one, for which two points are on a line of the arrangement, and for the remaining points one side contains exactly one point more.

\begin{lemma}\label{lem:main}
Let $A^{(k)}$ and $A^{(k+1)}$ be such that $A^{(k)}$ is bisecting and $A^{(k+1)}$ is not.
Then there either exists a sequence of arrangements
\[A^{(k)},A^{(k.5)}=A_0^{(k.5)},A_1^{(k.5)},\ldots,A_L^{(k.5)}=B^{(k.5)},B^{(k+1)},\]
where each $A_l^{(k.5)}$ is almost bisecting, $B^{(k+1)}$ is bisecting and $B^{(k)}$ is not bisecting, or there exists a sequence of arrangements
\[A^{(k)},A^{(k.5)}=A_0^{(k.5)},A_1^{(k.5)},\ldots,A_L^{(k.5)}=B^{(k.5)},B^{(k)},\]
where each $A_l^{(k.5)}$ is almost bisecting, $B^{(k)}$ is bisecting and $B^{(k+1)}$ is not bisecting.
Further, in the second case, the sequence for $B^{(k)}$ is the reverse of the sequence for $A^{(k)}$.
\end{lemma}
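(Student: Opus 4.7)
My plan is to identify the critical orientation change, describe a deterministic walk through almost bisecting arrangements at time $k.5$, and show that termination yields exactly one of the two cases in the lemma.

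I first analyze which orientation change can break the bisection. Since $A^{(k)}$ bisects and $A^{(k+1)}$ does not, transporting the pivots of $A^{(k)}$ across time $k.5$ must move some non-pivot point across a line of the arrangement. The only way a single collinear triple $(a,b,c)$ can cause this is if two of the three points, say $a$ and $b$, are pivots of a common line $\ell_0$ of $A^{(k)}$, and the third, $c$, is a non-pivot that switches sides of $\ell_0$ between times $k$ and $k+1$. In every other configuration (for example $c$ being a pivot of another line, or $a$ and $b$ being pivots of different lines) all bisection counts are preserved. Let $P$ be the point set containing $c$. A direct count then confirms that $A_0^{(k.5)} := A^{(k.5)}$ is almost bisecting with $P$ as its problem set: $P$ has two points on the arrangement, namely $c$ on $\ell_0$ and its original pivot $p$ on some other line $\ell'$, every other point set still has exactly one pivot on the arrangement, and the remaining $|P|-2$ points of $P$ split with excess exactly one on some side, say $R^+$.

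Next I describe a single step of the walk. Given an almost bisecting $A_l^{(k.5)}$ with problem set $P_l$ and excess on side $R^+_l$, I locate the unique pivot $p_l$ of $P_l$ that lies on a non-degenerate line $\ell_l'$, paired there with a pivot $q_l$ from some other point set $Q_l$. The walk step is a combinatorial rotation of $\ell_l'$ about $q_l$ in the unique direction that pushes $p_l$ toward the deficient side $R^-_l$: I replace $p_l$ by the first point of the instance encountered by the rotating line, call it $p_l' \in P_l'$, so that the new arrangement $A_{l+1}^{(k.5)}$ has the same degenerate line $\ell_0$, the modified line $\ell_l'$ now passing through $q_l$ and $p_l'$, and all other lines unchanged. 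A short count shows that $P_l$ is now perfectly bisected (the pivot $p_l$ has moved to $R^-_l$, restoring balance), while $P_l'$ acquires two points on the arrangement (its original pivot elsewhere plus the new $p_l'$) and becomes the next problem set with excess exactly one on the opposite side. Hence each step is canonical and preserves the almost bisecting property.

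Finally, because the walk lives in the finite space of arrangements on the $2n$ point sets and each intermediate almost bisecting arrangement admits exactly two \emph{doors} (the unique predecessor and the unique successor determined by the imbalance direction), the walk is a simple path and must terminate. Termination happens in one of two ways. Either the walk eventually reaches a rotation whose first new point belongs to one of the point sets of $\ell_0$, enabling a swap on $\ell_0$ that yields $B^{(k.5)}$ extending forward to a bisecting $B^{(k+1)}$, with the corresponding $B^{(k)}$ failing to bisect because the pivot swap sends $c$ back to $R^+$ at time $k$, which is option 1; or the walk closes up at $B^{(k.5)}$ whose backward extension to time $k$ is a bisecting $B^{(k)}$, in which case the symmetry of the rotation rule forces the walk starting from $B^{(k)}$ to retrace the sequence in reverse back to $A^{(k)}$, which is option 2. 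The main obstacle will be verifying the two-door property and the termination structure in complete generality: this requires a careful case analysis of how the rotation of $\ell_l'$ interacts with the fixed degenerate line $\ell_0$ and the other lines of the arrangement, ensuring that the rotation always stops at a valid almost bisecting configuration, that the walk direction is genuinely canonical, and that the two termination types correspond exactly to the two options stated.
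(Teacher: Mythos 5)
Your overall strategy is the same as the paper's: the only harmful orientation change is a non-pivot point $c$ crossing a line $\ell_0$ of the arrangement, the degenerate arrangement $A^{(k.5)}$ is almost bisecting, and one follows a finite walk of almost bisecting arrangements obtained by rotating lines so as to release a point of the current problem set into its deficient side, terminating when the problem set is one of those associated with $\ell_0$, at which point exactly one of $B^{(k)}$ and $B^{(k+1)}$ bisects; reversibility of the rotations gives the final claim. However, two concrete gaps remain. First, you assert that the point set $P$ containing $c$ has its original pivot on \emph{another} line $\ell'$. This fails when $c$ lies in the same point set as one of the two pivots of $\ell_0$ (the paper's second case): then both on-arrangement points of $P$ lie on $\ell_0$, your walk has no first step, and the argument must instead conclude immediately with $B^{(k.5)}=A^{(k.5)}$ and $L=0$. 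The case is easy, but it must be treated, since your described first rotation does not exist there.

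Second, your walk step is not well defined after the first move: from then on the problem set $P_l$ has \emph{two} pivots on non-degenerate lines (its original pivot and the point just hit), so ``the unique pivot of $P_l$ that lies on a non-degenerate line'' does not exist, and the successor is not determined by the imbalance direction alone---releasing either of the two on-line points of $P_l$ into the deficient side is a legal move, and both corresponding rotations push a point toward the deficient side. The correct rule, as in the paper, is to rotate the line through the pivot \emph{other than} the point just hit; the uniqueness of the rotation direction for each chosen point is what gives the two-door property, hence that the walk never revisits an arrangement and terminates, and it is also exactly what yields the ``reverse sequence'' claim in the second case. You explicitly defer this verification (and the termination case analysis deciding which of $B^{(k)}$, $B^{(k+1)}$ is bisecting, which depends on which side of the resolved line the relevant point of $\ell_0$ ends up on) as ``the main obstacle,'' but this is precisely where the paper's proof does its work, so as written the proposal does not yet establish the lemma.
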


In this lemma, the first case corresponds to a bisecting arrangement that can be changed to a new bisecting arrangement, whereas the second case corresponds to two bisecting arrangements disappearing at the same time.

\begin{proof}
As mentioned above, the situation of the lemma can only occur if the orientation change from $P^{(k)}$ to $P^{(k+1)}$ corresponds to a point $q^{(k)}$ moving over a line $\ell$ of the arrangement $P^{(k)}$.
Without loss of generality, let $\ell$ be defined by the points $p^{(k)}_1\in P_1^{(k)}$ and $p^{(k)}_2\in P_2^{(k)}$.
There are two cases we consider: either $q^{(k)}$ is in the same point set as one of the points on $\ell$, without loss of generality $q^{(k)}\in P_1^{(k)}$, or it is in a different point set, without loss of generality $q^{(k)}\in P_3^{(k)}$.
We start with the second case.
For an illustration of that case, see Figure \ref{fig:flip}.

\begin{figure}
\centering
\includegraphics[scale=0.85]{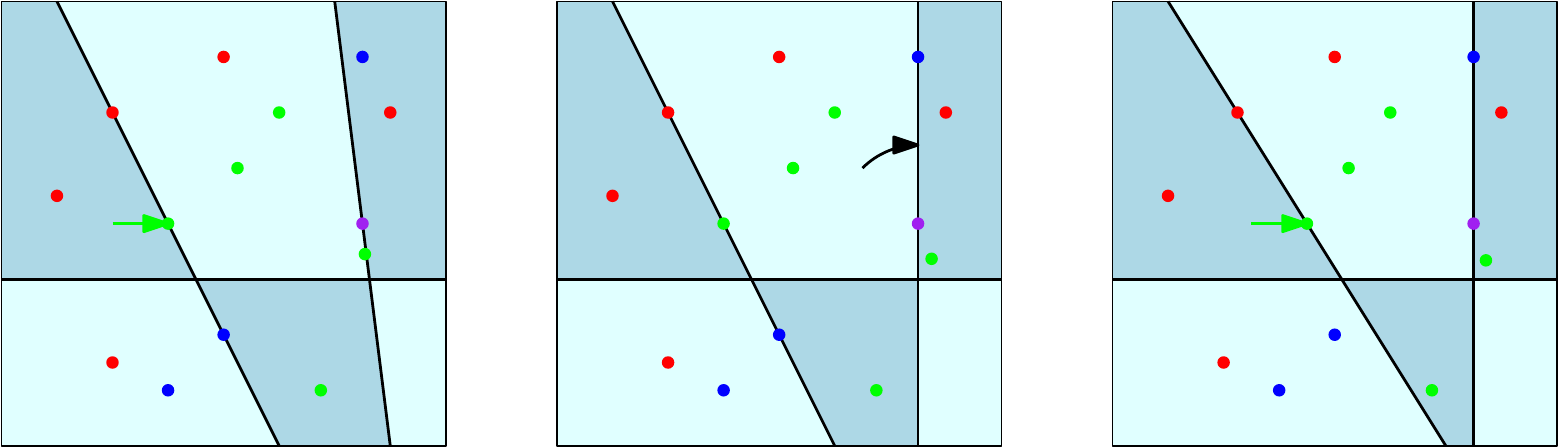}
\caption{Going from $A^{(k.5)}$ (left) via $B^{(k.5)}$ (middle) to $B^{(k+1)}$ (right). In this example, $B^{(k+1)}$ does not bisect the blue point set anymore.}
\label{fig:flip}
\end{figure}

\textbf{Case 1: $q^{(k)}\in P_3^{(k)}$.}
Orient the arrangement in such a way that $q^{(k)}$ is in $R^+(A^{(k)})$.
In $A^{(k.5)}$ the line $\ell$ contains three points, namely $p^{(k)}_1$, $p^{(k)}_2$ and $q^{(k)}$.
Note that $A^{(k.5)}$ is almost bisecting, and $R^+(A^{(k.5)})$ is the smaller side for $P_3^{(k.5)}$.
There exists another line $\ell'$ in $A^{(k.5)}$ which contains a second point $p_3^{(k.5)}$ of $P_3^{(k.5)}$.
Rotate $\ell'$ around the other point on it such that $p_3^{(k.5)}$ is in $R^+(A^{(k.5)})$.
Note that this direction of rotation is unique.
Continue the rotation until $\ell'$ hits another point $q_j^{(k.5)}\in P_j^{(k.5)}$ which is not on some line of $A^{(k.5)}$.
The resulting arrangement is now $A_1^{(k.5)}$.

Note that $A_1^{(k.5)}$ is again almost bisecting, and that the point set which is not exactly bisected is $P_j^{(k.5)}$.
While $j\not\in\{1,2,3\}$, we can now find another point in $P_j^{(k.5)}$ lying on a line of the arrangement $A_l^{(k.5)}$, rotate this line into the correct direction until a new point is hit, to get a new arrangement $A_{l+1}^{(k.5)}$.
As all of these arrangements are different, and there are only finitely many possible arrangements, at some point we will have $j\in\{1,2,3\}$.
This gives the arrangement $B^{(k.5)}$.
There are now several options to consider: there are 3 different ways how $p^{(k)}_1$, $p^{(k)}_2$ and $q^{(k)}$ can lie on $\ell$, $j$ can be $1$, $2$ or $3$, and the smaller side of $P_j^{(k.5)}$ can be $R^+(B^{(k.5)})$ or $R^-(B^{(k.5)})$.
In all of the cases, the only change from the arrangements $B^{(k)}$ to $B^{(k+1)}$ are the defining points of $\ell$.
It follows that exactly one of $B^{(k)}$ and $B^{(k+1)}$ can be bisecting, depending on whether the point on $\ell$ is in $R^+(B^{(k+1)})$ or not.

\textbf{Case 2: $q^{(k)}\in P_1^{(k)}$.}
In this case, we just have $A^{(k.5)}=B^{(k.5)}$.

Finally, the last claim follows from the fact that the directions of rotations from $A_l^{(k.5)}$ to $A_{l+1}^{(k.5)}$ are unique.
\end{proof}

\begin{remark}
During the process, it can also happen that two bisecting arrangement appear at the same time.
In this case, we immediately get the analogous lemma, just reversing $k$ and $(k+1)$.
\end{remark}

With all these lemmas at hand, we can now finally give a

\begin{proof}[Proof of the discrete planar pizza cutting theorem]
Let $P=(P_1,\ldots,P_{2n})$ be in general position.
By Lemma \ref{lem:odd}, we may assume that each $P_i$ contains an odd number of points.
Consider the sequence $P^{(0)},P^{(1)},\ldots,\ldots,P^{(C)}=P$ of point sets.
By Lemma \ref{lem:beginning}, $P^{(0)}$ has an odd number of bisecting arrangements.
When going from $P^{(k)}$ to $P^{(k+1)}$, either all bisecting arrangements stay bisecting or, by Lemma \ref{lem:main}, one of them can be changed into a new bisecting arrangement or exactly two bisecting arrangements appear or disappear.
It follows that for each $k$, $P^{(k)}$ has an odd number of bisecting arrangements.
In particular, also $P$ has an odd number of bisecting arrangements, and thus at least 1.
\end{proof}

\subsection{Containment in PPA}

In order to show that \textsc{DiscretePizzaCutting} is in PPA, we need to define a graph where the neighborhoods of each vertex can be efficiently computed and all odd-degree vertices, except the starting vertex, correspond to bisecting arrangements.
In the following, we describe such a graph.
For an illustration of the graph, see Figure \ref{fig:ppagraph}.

\begin{figure}
\centering
\includegraphics[scale=0.8]{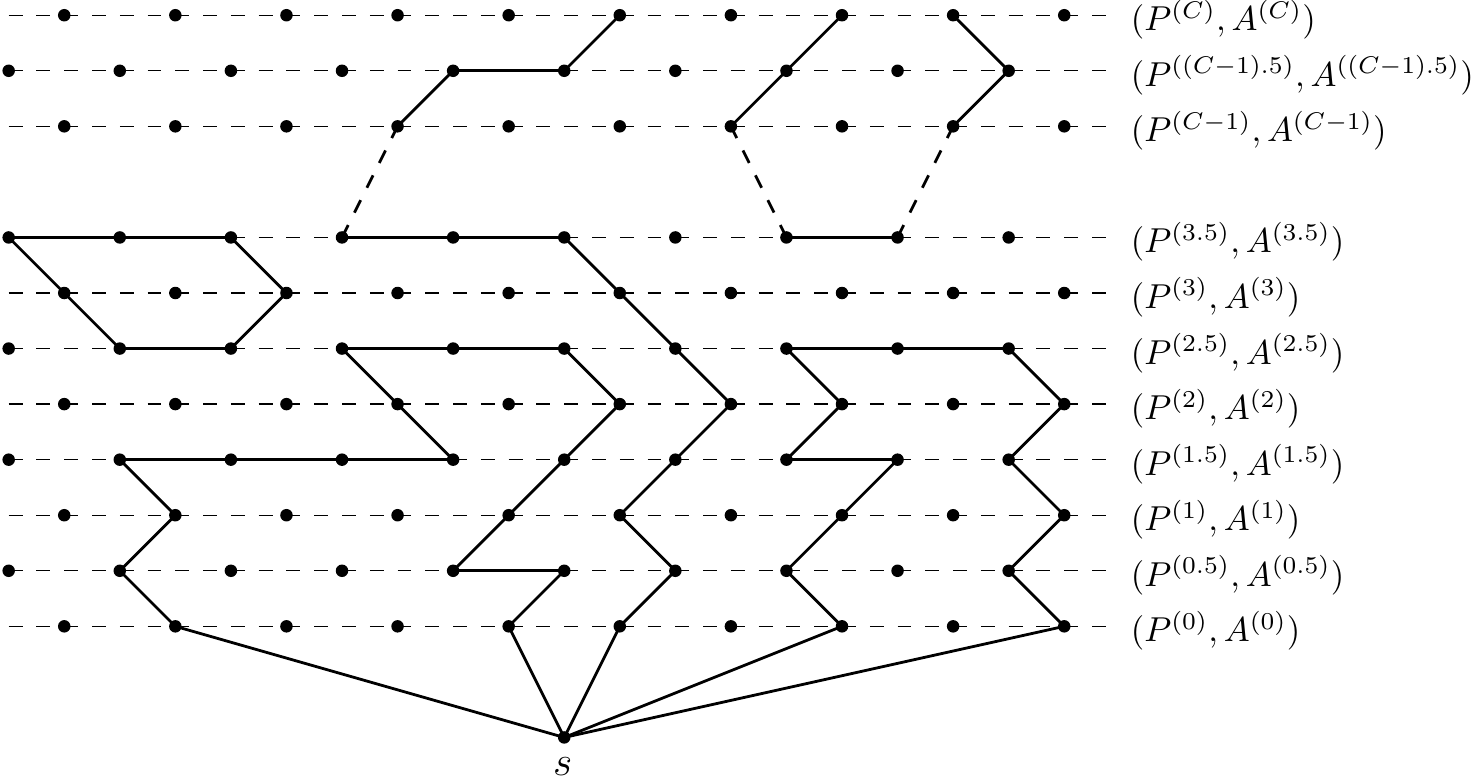}
\caption{A schematic drawing of the graph whose leafs correspond to bisecting arrangements.}
\label{fig:ppagraph}
\end{figure}

\textbf{The vertex set.} Our vertex set consists of a starting vertex $s$, as well as vertices of the form $(P^{(k)},A^{(k)})$ and $(P^{(k.5)},A^{(k.5)})$.
For vertices $(P^{(k)},A^{(k)})$, $A^{(k)}$ is an arrangement which is not necessarily bisecting, but whose lines contain exactly one point of each $P_i^{(k)}$.
Similarly, for vertices $(P^{(k.5)},A^{(k.5)})$, the arrangement $A^{(k.5)}$ is not necessarily almost bisecting, but its lines contain at least one point of each $P_i^{(k.5)}$ and exactly 2 for one of them.
In particular, one of the lines of $A^{(k.5)}$ is the line $\ell$ through the unique three collinear points.

\textbf{The edge set.} The starting vertex $s$ in connected to all vertices $(P^{(0)},A^{(0)})$, for which $A^{(0)}$ is bisecting.
A vertex of the form $(P^{(k)},A^{(k)})$, $k<L$ is connected to the vertices $(P^{((k-1).5)},A^{((k-1).5)})$ and $(P^{(k.5)},A^{(k.5)})$ if and only if $A^{(k)}$ is bisecting.
Otherwise, it is not connected to any other vertex.
Similarly, the vertices $(P^{(C)},A^{(C)})$ are connected to $(P^{((C-1).5)},A^{((C-1).5)})$ if and only if $A^{(C)}$ is bisecting, and not connected to anything otherwise.
Also a vertex of the form $(P^{(k.5)},A^{(k.5)})$ where $A^{(k.5)}$ is not almost bisecting is not connected to any other vertex.
For vertices $(P^{(k.5)},A^{(k.5)})$ with $A^{(k.5)}$ almost bisecting we distinguish two cases.
If $\ell$ does not contain a point of the point set which is contained twice in the lines of $A^{(k.5)}$, we connect $(P^{(k.5)},A^{(k.5)})$ to the two vertices which correspond to the rotations defined in the proof of Lemma \ref{lem:main}.
Otherwise, we connect $(P^{(k.5)},A^{(k.5)})$ to the single vertex that we get by such a rotation, as well as to either $(P^{(k)},A^{(k)})$ or $(P^{(k+1)},A^{(k+1)})$, depending on which one has the bisecting arrangement (note that we get from Lemma \ref{lem:main} that exactly one of the two arrangements is bisecting).

Note that all vertices have degree 0 or 2, except $s$ and the vertices $(P^{(C)},A^{(C)})$ with $A^{(C)}$ bisecting, that is, the vertices corresponding to solutions.
All the vertices corresponding to solutions have degree 1.
The starting vertex has degree $(2n-1)!!$, which is exponential, so we cannot compute its neighborhood in polynomial time.
This is however not an issue, as already noted in the paper where PPA is defined \cite{Papadimitriou}, provided that we can decide for any two vertices in polynomial time whether the are connected, and there is an efficiently computable \emph{pairing function}, that is, a function $\phi(v,v')$ which takes as input a vertex $v$ and an outgoing edge $\{v,v'\}$ and computes another vertex $v''$ such that $\{v,v''\}$ is also an edge.
If $v$ has even degree, then $\phi(v,v')\neq v'$ for all $v'$.
If $v$ has odd degree, then there is exactly one $v'$ for which $\phi(v,v')= v'$.
The exact statement from \cite{Papadimitriou} is as follows:

\begin{lemma}[\cite{Papadimitriou}]\label{lem:ppa}
Any problem defined in terms of an edge recognition algorithm and a pairing function is in PPA.
\end{lemma}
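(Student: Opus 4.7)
The plan is to reduce the given problem to the canonical PPA-complete problem \textsc{Leaf}, which asks, given an implicit undirected graph of maximum degree~$2$ presented by a polynomial-time neighborhood oracle together with one of its leaves, to find another leaf. Since membership of \textsc{Leaf} in PPA is immediate from the definition of the class, it suffices to exhibit a polynomial-time reduction that converts an instance described by an edge recognition algorithm together with a pairing function $\phi$ into such an implicit max-degree-$2$ graph whose leaves encode the desired odd-degree vertices of $G$.

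For the construction, I would build an auxiliary graph $H$ whose vertices are the \emph{half-edges} of $G$, namely ordered pairs $(v,v')$ such that $\{v,v'\}$ is an edge of $G$; membership is decidable in polynomial time by the edge recognition algorithm. In $H$, I place a \emph{crossing edge} between $(v,v')$ and $(v',v)$ for every edge $\{v,v'\}$ of $G$, and a \emph{pairing edge} between $(v,v')$ and $(v,\phi(v,v'))$ whenever $\phi(v,v')\neq v'$. The hypotheses on $\phi$ say that at every vertex $v$ the function $\phi$ is an involutive matching on the edges incident to $v$, with a fixed point precisely when $\deg_G(v)$ is odd, and in that case the fixed point is unique. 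Consequently, each half-edge acquires exactly one crossing neighbor and at most one pairing neighbor, so $H$ has maximum degree~$2$; and a half-edge $(v,v')$ is a leaf of $H$ if and only if $\phi(v,v')=v'$, which in turn happens exactly once at each odd-degree vertex of $G$. Thus leaves of $H$ are in bijection with odd-degree vertices of $G$, and given a leaf of $H$ another leaf can be reached by walking along the unique path in $H$, whose endpoints can be identified via the two assumed oracles in polynomial time per step.

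The step I expect to be the main obstacle is feeding \textsc{Leaf} a concrete initial leaf of $H$, because the given odd-degree vertex $s$ of $G$ may have superpolynomially many half-edges of which only one is unmatched, so the unique leaf at $s$ is not directly accessible. I would handle this by augmenting $H$ with a distinguished start gadget: attach to $s$ a fresh vertex $s^{\star}$ joined by a single edge to any easily named half-edge $(s,v')$, and on the $s$-side redefine $\phi$ so that the formerly unmatched half-edge becomes matched with $(s,v')$ and $s^{\star}$ is now the only leaf of the augmented graph coming from $s$. The augmented graph still has maximum degree~$2$, the edge and pairing oracles are still polynomial-time, and any leaf other than $s^{\star}$ corresponds to a second odd-degree vertex of $G$, which is exactly the desired solution. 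Applying \textsc{Leaf} then places the original problem in PPA, establishing the lemma.
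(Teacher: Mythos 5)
The paper does not reprove this lemma---it is quoted from Papadimitriou---and the core of your proposal is exactly the standard argument behind it: the graph $H$ on half-edges, with a crossing edge between $(v,v')$ and $(v',v)$ and a pairing edge between $(v,v')$ and $(v,\phi(v,v'))$ whenever $\phi(v,v')\neq v'$, has maximum degree two, its neighborhoods are computable from the two oracles, and its leaves are precisely the fixed points of $\phi$, hence in bijection with the odd-degree vertices of $G$. Up to that point your reduction to \textsc{Leaf} is correct.

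The start gadget, however, does not work as described. To make $s^{\star}$ the only leaf at $s$ you match the formerly unmatched half-edge $(s,v_0)$ with the named half-edge $(s,v')$; but then the neighborhood circuit of $(s,v')$ must output $(s,v_0)$, and $v_0$ (the unique fixed point of $\phi(s,\cdot)$) is exactly what you conceded you cannot access---finding it could require searching the superpolynomially large neighborhood of $s$. Moreover $(s,v')$ would then have three neighbors (its crossing partner $(v',s)$, its new pairing partner $(s,v_0)$, and $s^{\star}$) unless you also delete its old pairing edge, and doing so leaves its former partner $(s,\phi(s,v'))$ unmatched, i.e.\ a spurious leaf of $H$ that does not correspond to any odd-degree vertex of $G$. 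The correct resolution is that no gadget is needed: in the formulation being used (and in the application in this paper) the unmatched edge at the standard vertex is explicitly known---here the fixed point of the pairing at $s$ is the perfect matching $\{\{w_1,w_2\},\ldots,\{w_{2n-1},w_{2n}\}\}$---so the distinguished leaf $(s,v_0)$ of $H$ can be written down directly and handed to \textsc{Leaf}. Without such a known starting leaf your degree-two graph is not a valid \textsc{Leaf} instance at all, so this assumption must be made explicit rather than circumvented by the gadget.
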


It remains to show that we have both these ingredients.

\begin{lemma}
For any two vertices, we can decide in polynomial time whether they are connected.
\end{lemma}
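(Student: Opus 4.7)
The plan is to go through each syntactic type of edge in the graph and exhibit a polynomial-time verification routine; adjacency of two given vertices is then decided by classifying the edge type from their time indices and invoking the appropriate routine. I will represent a vertex by a time index $t\in\{0,0.5,1,\ldots,C\}$ (where $C\in O(m^3)$ has polynomial description length) together with an arrangement given by $n$ pairs of input-point indices. From the input $P$ and the auxiliary well-separated set $Q$, the point set $P^{(t)}$ and its full order type can be computed in polynomial time by linear interpolation along the matched pairs.

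The first step is to set up two basic subroutines: testing whether a given arrangement is bisecting (count points in $R^+$ and $R^-$ for each $P_i^{(t)}$) and whether it is almost bisecting (the same, with the allowed deviation). Both run in time polynomial in the input size. Using these, the edges incident to $s$ (which require $t=0$ and $A^{(0)}$ bisecting) and the edges between an integer-time vertex $(P^{(k)},A^{(k)})$ and a half-integer-time vertex $(P^{(k.5)},A^{(k.5)})$ (which require that $A^{(k)}$ is bisecting, that the defining pairs of points of the two arrangements agree, and that the extra collinearity of $A^{(k.5)}$ sits on the unique line whose three collinear points witness the orientation change between $P^{(k)}$ and $P^{(k+1)}$) can be checked directly.

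The substantive case is the rotation edge between two almost-bisecting vertices at the same half-integer time. The crucial observation is that a rotation step is a purely local operation on $(P^{(k.5)},A^{(k.5)})$: the wrong-count point set $P_j^{(k.5)}$ is uniquely determined, the line $\ell'$ of $A^{(k.5)}$ containing the second point of $P_j^{(k.5)}$ is unique, the rotation direction around the other defining point of $\ell'$ is forced by the side onto which this second point must be moved, and the first point of $P^{(k.5)}$ met by this rotation is then determined by a simple angular comparison over the remaining points. Hence $A'^{(k.5)}$ is a rotation-neighbor of $A^{(k.5)}$ iff it differs only in $\ell'$ being replaced by this computed line. The special case in which the collinear line $\ell$ itself contains a point of $P_j^{(k.5)}$ is handled identically, except that one of the two would-be neighbors is replaced by whichever of $(P^{(k)},A^{(k)})$ and $(P^{(k+1)},A^{(k+1)})$ carries a bisecting arrangement, a test already provided by the bisecting subroutine.

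The main obstacle I anticipate is not computational but conceptual: one has to verify that the data ``direction of rotation'' and ``first point hit'' appearing in Lemma \ref{lem:main} are indeed determined by the vertex alone, with no reference to how the process reached it. Once this invariance is noted, every check collapses to counting points or comparing angles between finitely many candidates, and the whole adjacency test runs in polynomial time.
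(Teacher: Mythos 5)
Your proposal is correct and takes essentially the same route as the paper: adjacency is decided by noting that every vertex other than $s$ has a constant-size neighborhood that is determined locally (history-independently) by the vertex's own data and is computable in polynomial time, while edges to $s$ reduce to checking that the time index is $0$ and the arrangement bisects; you simply spell out the subroutines the paper leaves implicit. Just be careful in the generic almost-bisecting case that there are \emph{two} forced rotations (one for each of the two on-line points of the doubly-covered set $P_j$), as your closing remark about ``two would-be neighbors'' indicates you intend, even though the earlier phrasing about a unique line $\ell'$ suggests only one.
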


\begin{proof}
It follows from the construction that for any vertex except the starting vertex $s$, the neighborhood can be computed in polynomial time.
In particular, it can also be checked whether two such vertices are connected.
As for the starting vertex $s$, some other vertex can only be connected to it if its point set is $P^{(0)}$ and its arrangement is bisecting.
The first is encoded in the label of the vertex and the second can easily be checked in polynomial time.
\end{proof}

\begin{lemma}
There is a pairing function $\phi$ which can be computed in polynomial time.
\end{lemma}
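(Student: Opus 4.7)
My plan is to construct $\phi$ vertex-by-vertex according to the graph's structure, isolating the only nontrivial piece at the start vertex $s$. For every vertex $v\neq s$, the degree is $0$, $1$, or $2$, so the pairing is essentially forced: at a degree-$1$ solution vertex, $\phi$ fixes the unique edge; at a degree-$2$ vertex, $\phi$ swaps the two incident edges. What remains is to turn "swap" into an algorithm, i.e.\ given $v$ and one neighbor $v'$, output the other neighbor $v''$. For a bisecting vertex $(P^{(k)},A^{(k)})$ this is immediate, since the two neighbors live at distinct half-steps $P^{((k-1).5)}$ and $P^{(k.5)}$; I would inspect the half-step index of $v'$ and perform the transition in the opposite direction, which is a polynomial-time local operation on $A^{(k)}$. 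For an almost-bisecting half-step vertex, both neighbors are obtained by the explicit rotations from the proof of Lemma \ref{lem:main}; I would compute both candidate rotations and return the one distinct from $v'$, with an obvious branch for the degenerate case where one of the two rotation neighbors is replaced by the integer-step vertex $(P^{(k)},A^{(k)})$ or $(P^{(k+1)},A^{(k+1)})$ carrying the bisecting arrangement.

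The remaining case is $s$, which has odd degree $(2n-1)!!$. Since $P^{(0)}$ is well-separated, the bisecting arrangements at $P^{(0)}$ are in canonical bijection with perfect matchings of $\{1,\ldots,2n\}$ via the $\alpha$-Ham-Sandwich cuts (cf.\ Lemma \ref{lem:beginning}), so the task reduces to building a polynomial-time-computable involution $\tau$ on perfect matchings with exactly one fixed point. I propose defining $\tau$ recursively: given a matching $M$, let $b$ be the partner of $1$; if $b=2$, apply $\tau$ to $M$ restricted to $\{3,\ldots,2n\}$, and otherwise let $c$ be the partner of $2$ and swap, setting $\tau(M):=(M\setminus\{\{1,b\},\{2,c\}\})\cup\{\{1,c\},\{2,b\}\}$. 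The swap is visibly self-inverse, and by induction on $n$ the unique fixed point is $M_0=\{\{1,2\},\{3,4\},\ldots,\{2n-1,2n\}\}$. To compute $\phi(s,v')$, I would read off the matching from $A_{v'}$ in polynomial time, apply $\tau$, and reconstruct the resulting arrangement by $n$ invocations of the $\alpha$-Ham-Sandwich algorithm.

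The main obstacle I anticipate is verifying that the various "other neighbor" computations at the half-step vertices are truly well-defined and reversible, i.e.\ that the rotation procedure in Lemma \ref{lem:main} can be run from either end with polynomial overhead and that the reverse rotation lands on the expected vertex. This is essentially implicit in the lemma (the chain $A_0^{(k.5)},\ldots,A_L^{(k.5)}$ is deterministic once the starting orientation is fixed), but it needs a short bookkeeping argument to be made into a two-sided algorithm. Everything else is either direct inspection of the vertex label or a call to a polynomial-time geometric subroutine, and the three paragraphs combined yield a $\phi$ that satisfies the hypotheses of Lemma \ref{lem:ppa}.
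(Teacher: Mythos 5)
Your proposal is correct and follows essentially the same route as the paper: for every vertex other than $s$ the degree is at most $2$ and the neighbors are locally computable, and at $s$ you reduce to perfect matchings of $K_{2n}$ and use the same swap-the-partners-of-the-first-unmatched-pair involution, whose unique fixed point is $\{\{1,2\},\{3,4\},\ldots,\{2n-1,2n\}\}$. The reversibility worry you raise about the half-step vertices is not actually needed, since the graph's edges are single rotation steps (not whole chains), so each neighbor is already a local polynomial-time computation.
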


\begin{proof}
As all vertices except $s$ have degree $0$, $1$ or $2$, and the neighborhoods can be computed in polynomial time, the pairing function follows trivially for these vertices.
As for $s$, we note that its neighbors correspond to partitions of the $2n$ point sets into pairs.
In other words, its neighbors can be interpreted as perfect matchings in the complete graph $K_{2n}$ with vertex set $\{w_1,\ldots,w_{2n}\}$.
It is thus enough to describe a pairing function for these perfect matchings.
We do this in an algorithmic fashion.

Let $M$ be some perfect matching.
Consider the vertices $w_1$ and $w_2$.
Assume first that they are not connected to each other, that is, we have two distinct edges $\{w_1,w_a\}$ and $\{w_2,w_b\}$ in $M$.
In that case, define the pairing $M':=M\setminus\{\{w_1,w_a\},\{w_2,w_b\}\}\cup\{\{w_1,w_b\},\{w_2,w_a\}\}$, that is, flip the edges incident to $w_1$ and $w_2$.
Note that the pairing of $M'$ is again $M$, that is, it indeed is a pairing.

If $w_1$ and $w_2$ are connected to each other, repeat the same process with $w_3$ and $w_4$, and so on.
This process defines a pairing for each matching except $\{\{w_1,w_2\},\{w_3,w_4\},\ldots,\{w_{2n-1},w_{2n}\}\}$.
Further, the algorithm clearly runs in polynomial time.
We thus get a valid pairing function.
\end{proof}

We thus get from Lemma \ref{lem:ppa} that \textsc{DiscretePizzaCutting} is in PPA.
Together with Corollary \ref{cor:ppahard}, we can thus conclude our main result.

\begin{corollary}
\textsc{DiscretePizzaCutting} is PPA-complete.
\end{corollary}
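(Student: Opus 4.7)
The plan is to assemble the two halves of a completeness result: hardness and containment. For hardness, I would invoke Corollary \ref{cor:ppahard} directly, which establishes \textsc{DiscretePizzaCutting} as PPA-hard via the polynomial-time reduction from \textsc{NecklaceSplitting} (itself PPA-hard by \cite{GoldbergNecklace}) obtained by placing the two sets of $n$ necklace points on the two intervals of the moment curve. Nothing further is needed on that side.

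For containment, I would package the material developed in this subsection into the template of Lemma \ref{lem:ppa}. Concretely, the plan is to exhibit the graph $G$ whose vertex set consists of the starting vertex $s$ together with all pairs $(P^{(k)},A^{(k)})$ and $(P^{(k.5)},A^{(k.5)})$ as defined above. The construction in the preceding paragraphs ensures that every vertex other than $s$ and the bisecting arrangements $(P^{(C)},A^{(C)})$ has degree $0$ or $2$, while those solution vertices have degree $1$; the starting vertex $s$ has degree $(2n-1)!!$, which is odd by the computation in Lemma \ref{lem:beginning}. Hence the odd-degree vertices of $G$ are exactly $s$ and the bisecting arrangements for $P$, so finding an odd-degree vertex other than $s$ solves \textsc{DiscretePizzaCutting}.

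To apply Lemma \ref{lem:ppa}, I would then invoke the two preceding lemmas: the edge recognition lemma, which gives a polynomial-time procedure to decide whether two given vertices are adjacent (handled locally for non-starting vertices via the deterministic rotation rule from Lemma \ref{lem:main}, and handled for $s$ by checking that the first coordinate is $P^{(0)}$ and the arrangement is bisecting), and the pairing-function lemma, which provides the efficient pairing $\phi$ (trivial at vertices of degree at most $2$, and defined at $s$ via the explicit flip rule on perfect matchings of $K_{2n}$). Together these two ingredients satisfy the hypotheses of Lemma \ref{lem:ppa}, placing \textsc{DiscretePizzaCutting} in PPA.

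Combining PPA-hardness from Corollary \ref{cor:ppahard} with this containment yields the corollary. There is essentially no further obstacle here, since the entire technical content of containment has already been discharged by the graph construction together with the two supporting lemmas; the statement is a one-line synthesis of the preceding results.
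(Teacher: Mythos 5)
Your proposal matches the paper's own argument: containment follows by feeding the graph construction, the edge-recognition lemma, and the pairing-function lemma into Lemma \ref{lem:ppa}, and completeness then follows by combining this with the PPA-hardness of Corollary \ref{cor:ppahard}. This is exactly the one-line synthesis the paper performs, so nothing is missing.
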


\section{Conclusion}

We have shown several complexity results related to the pizza cutting problem.
Our main result is that \textsc{DiscretePizzaCutting} is PPA-complete.
While we have only considered the planar case, the problem as well as our arguments extend to higher dimensions.
More precisely, the proof of the Hobby-Rice theorem using the pizza cutting theorem can be adapted to work for any dimension of the pizza cutting theorem.
Further, the proof of the discrete pizza cutting theorem can be adapted to any dimension where the number of initial solutions is odd.
This is the case if the dimension is a power of 2 \cite{Pizza1}.
Thus, in all these dimensions the analogous versions of \textsc{DiscretePizzaCutting} are also PPA-complete, assuming the dimension is fixed.
On the other hand, it is an open problem whether the pizza cutting theorem also holds in other dimensions.

We have also shown that \textsc{PizzaCutting} is FIXP-hard.
It is an interesting problem whether the problem is in FIXP or even harder.
One way to show that the problem is in FIXP is to find a proof for the planar pizza cutting theorem using Brouwer's fixpoint theorem.
Such a proof would have the potential to generalize to any dimension, resolving the above question.



\bibliographystyle{plainurl}
\bibliography{refs}

\end{document}